\documentclass[11pt]{article}

\usepackage[T1]{fontenc}
\usepackage[utf8]{inputenc}
\usepackage[margin=1in]{geometry}
\usepackage{amsmath,amssymb,amsthm,mathtools}
\usepackage{enumitem}
\usepackage[hidelinks]{hyperref}
\usepackage[nameinlink,capitalize]{cleveref}
\usepackage[textsize=footnotesize]{todonotes}
\usepackage{xcolor}

\newtheorem{theorem}{Theorem}[section]

\newtheorem{corollary}[theorem]{Corollary}
\newtheorem{proposition}[theorem]{Proposition}
\theoremstyle{definition}
\newtheorem{definition}[theorem]{Definition}
\theoremstyle{remark}
\newtheorem{remark}[theorem]{Remark}
\theoremstyle{example}
\newtheorem{example}[theorem]{Example}

\newcommand{\F}{\mathbb{F}}
\newcommand{\Fp}{\mathbb{F}_p}

\newcommand{\Tr}{\operatorname{Tr}}
\newcommand{\AC}{\operatorname{AC}}
\newcommand{\DLCT}{\operatorname{DLCT}}
\newcommand{\DDT}{\operatorname{DDT}}

\newcommand{\1}{\mathbf{1}}

\newcommand{\Kl}{\operatorname{Kl}}

\newcommand*\colvec[3][]{
\begin{pmatrix}
\ifx\relax#1\relax\else#1\\\fi#2\\#3
\end{pmatrix}
}
\crefname{lemma}{Lemma}{Lemmas}
\Crefname{lemma}{Lemma}{Lemmas}
\crefname{proposition}{Proposition}{Propositions}
\Crefname{proposition}{Proposition}{Propositions}
\crefname{corollary}{Corollary}{Corollaries}
\Crefname{corollary}{Corollary}{Corollaries}
\crefname{theorem}{Theorem}{Theorems}
\Crefname{theorem}{Theorem}{Theorems}
\crefname{definition}{Definition}{Definitions}
\Crefname{definition}{Definition}{Definitions}
\crefname{remark}{Remark}{Remarks}
\Crefname{remark}{Remark}{Remarks}
\crefname{example}{Example}{Examples}
\Crefname{example}{Example}{Examples}

\title{\bfseries Differential-linear profiles over finite fields of arbitrary characteristic}
\author{\Large\bfseries Kirpa Garg$^1$, Constanza Riera$^2$ and Pantelimon St\u{a}nic\u{a}$^3$\\[6pt]
$^1$University of Rouen Normandy -- LITIS UR 4108,
Avenue de l'Universit\'e, \\
76800 Saint-\'Etienne-du-Rouvray, France;
\texttt{kirpa.garg@univ-rouen.fr}\\[4pt]
$^2$Department of Computer Science, Electrical Engineering and Mathematical Sciences,\\
Western Norway University of Applied Sciences, 5020 Bergen, Norway;
\texttt{csr@hvl.no}\\[4pt]
$^3$Applied Mathematics Department, Naval Postgraduate School,\\
Monterey, CA 93943, USA;
\texttt{pstanica@nps.edu}}
\date{}

\begin{document}
\maketitle

\begin{abstract}
The (binary) differential-linear connectivity table (DLCT) measures the dependence between an input difference and a linear mask applied to the corresponding output difference.  For vectorial Boolean functions, each DLCT entry is one half of an additive autocorrelation value.  We extend this relation to functions over finite fields of arbitrary prime characteristic by introducing a level-resolved $p$-ary differential-linear profile.  Its entries are the centered numbers of inputs for which a derivative component has each prescribed trace value in $\Fp$.  The discrete Fourier transform of this profile is the family of additive autocorrelations obtained by multiplying the output mask by the nonzero elements of $\Fp$; when $p=2$, the usual binary identity is recovered.

For a fixed input difference, we prove that the profiles over all nonzero output masks determine the corresponding DDT row exactly, and we give an explicit inversion formula.  We also establish a second-moment identity: the total profile energy in one derivative direction is a constant multiple of the squared Euclidean distance between that DDT row and the balanced row.  Thus this energy is determined by the full row differential spectrum, not by differential uniformity alone.  It follows that all profiles in a direction vanish exactly when the derivative is balanced; for square maps in odd characteristic, this gives a characterization of planarity.  As concrete odd-characteristic examples, we determine the complete profile of the monomial $x^{p^k+1}$ and derive an exact Kloosterman-sum formula for the inverse monomial.  Finally, we determine the behavior of the profiles under equivalence.  EA-equivalence reindexes the input and output masks and translates the trace level, whereas a general CCZ equivalence may mix several derivative directions.  Nevertheless, for square maps the global nontrivial profile energy is CCZ-invariant.
\end{abstract}

\noindent
{\bf Keywords.} differential-linear connectivity; $p$-ary functions; difference distribution table; autocorrelation; finite fields; CCZ equivalence.

\noindent{\bf Mathematics Subject Classification 2020.} 
{Primary 94A60; Secondary 11T71, 94D10.}

\section{Introduction}

Differential and linear cryptanalysis are two fundamental techniques for evaluating the security of block ciphers.  Differential-linear cryptanalysis combines them by following a differential through one part of a cipher and a linear approximation through another.  The interaction between these two parts need not be independent.  To measure this dependence, Bar-On, Dunkelman, Keller and Weizman introduced the differential-linear connectivity table (DLCT) at EUROCRYPT 2019~\cite{BDKW19}.

For a vectorial Boolean function $F:\F_2^n\to\F_2^m$, the DLCT entry associated with an input difference $u\in\F_2^n$ and a nonzero output mask $v\in\F_2^m$ measures the bias of the Boolean derivative component $x\longmapsto v\cdot\bigl(F(x+u)+F(x)\bigr)$. Subsequent work developed the binary theory substantially.  Canteaut, K\"olsch, Li, Li, Li, Qu and Wiemer~\cite{CanteautEtAl19} related the DLCT to additive autocorrelation, the Walsh transform and the differential distribution table (DDT), studied its behavior under affine, EA and CCZ equivalence, and treated APN, plateaued and almost bent functions together with several polynomial families.  Related binary structural results were obtained independently in~\cite{LiLiLiQu19}.  In particular, the binary DLCT is one half of the additive autocorrelation.  More recent work has produced additional binary families with low differential-linear uniformity~\cite{XieEtAl26}.

Differential-linear cryptanalysis has also been considered outside the binary setting.  Xu, Chen, Wang and Wei formalized differential-linear cryptanalysis over $\Fp$, used a nonbinary DLCT, and related it to the DDT in their analysis of an MPC-friendly cipher~\cite{XuEtAl23}.  More recently, Niu, Sun, Yan and Wang formulated differential-linear approximations over arbitrary finite abelian groups in terms of group characters~\cite{NiuEtAl26}.  These works provide the cryptanalytic and character-theoretic setting for nonbinary differential-linear analysis.  The present paper addresses a different structural question: how to retain, in a real integer-valued table, the separate multiplicities of all $p$ trace values of a derivative component.

Several related connectivity and higher-order differential notions have likewise been developed over finite fields of arbitrary characteristic.  Garg, Hasan, Riera and St\u{a}nic\u{a} studied second-order zero differential spectra for APN and other low-differential-uniformity functions in odd characteristic~\cite{GargEtAl25SOZD,GargEtAl25JAA}.  They showed, in particular, that for an odd or even function in odd characteristic, second-order zero differential uniformity one implies APN-ness, while the converse need not hold~\cite{GargEtAl25JAA}.  In separate work, the same authors related revised boomerang connectivity tables to the DDT and used this relation to obtain structural results and explicit values for several low-differential-uniformity families~\cite{GargEtAl25BCT}.  These results motivate developing the profile directly from the DDT and testing it on functions whose derivative structure is already understood.

For a vectorial $p$-ary function $F:\F_{p^n}\longrightarrow\F_{p^m}$, where $p$ is an arbitrary prime, one direct analogue of a binary DLCT entry is obtained by counting the inputs for which the trace of a derivative component is zero and subtracting the uniform value $p^{n-1}$.  For $p>2$, however, the trace can take $p$ values rather than only two.  The zero-trace count therefore retains only part of the level-distribution information that is contained in the binary entry.

We retain all trace levels.  For $j\in\Fp$, we center by $p^{n-1}$ the number of $x\in\F_{p^n}$ satisfying $\Tr_1^m(v(F(x+u)-F(x)))=j$.  We call the resulting $p$-tuple the $p$-ary differential-linear profile at $(u,v)$.  Its discrete Fourier transform is the family of additive autocorrelations obtained from the masks $tv$, $t\in\Fp^*$, where $\zeta_p=e^{2\pi i/p}$ is used for the canonical additive character.  Consequently, the familiar binary relation between DLCT and autocorrelation is not lost in odd characteristic; rather, it becomes a finite Fourier relation involving all nonzero scalar multiples of the mask.

The main results are as follows.  First, character orthogonality gives exact Fourier formulas between the profile and the autocorrelations along the scalar orbit $tv$, $t\in\Fp^*$.  Second, every profile entry is an affine trace-hyperplane sum of a DDT row, and the profiles over all nonzero output masks recover that row by an explicit inversion formula.  Third, the total squared profile values in a fixed derivative direction equal a constant multiple of the squared Euclidean distance between the corresponding DDT row and the balanced row.  This identity characterizes balanced derivatives by vanishing profile and, for square maps in odd characteristic, characterizes planar functions by the vanishing of all nontrivial profiles.  We then work out two odd-characteristic power-function examples: the monomial $x^{p^k+1}$ has either vanishing profiles or a single explicit extreme pattern, according to the parity of $n/\gcd(k,n)$, while the inverse monomial has an exact profile formula in terms of classical Kloosterman sums.  Finally, EA-equivalence reindexes the masks and translates the trace level, so it preserves the complete profile spectrum and $\Gamma_F^{(p)}$.  For square maps, an exact CCZ transport law explains how a general graph equivalence can mix derivative directions; although individual profiles need not be preserved, the global nontrivial profile energy is CCZ-invariant.

\medskip
\noindent\textbf{Literature note.}
The standard DLCT and its structural theory are binary~\cite{BDKW19,CanteautEtAl19,LiLiLiQu19}.  Xu et al.~\cite{XuEtAl23} extended differential-linear analysis and the DLCT to $\Fp$ and related the resulting table to the DDT, while Niu et al.~\cite{NiuEtAl26} treated differential-linear approximations over finite abelian groups through group characters.  A $c$-DLCT for vectorial Boolean functions has also been introduced and studied~\cite{EddahmaniMesnager24}, and we do not pursue that extension here.  Our contribution is the level-resolved, integer-valued profile and the consequences of retaining all trace levels: DDT-row reconstruction, exact moment identities, and the equivalence results proved below.  The basic Fourier step itself is a direct application of character orthogonality.

The paper is organized as follows.  Section~\ref{S2} recalls the binary DLCT and additive autocorrelation.  Section~\ref{S3} defines the $p$-ary profile and establishes its Fourier representation.  Section~\ref{sec:ddt-structure} develops the DDT reconstruction and moment identities and then treats the Gold-type and inverse monomials.  Section~\ref{sec:equivalence} studies EA and CCZ equivalence, and Section~\ref{S6} summarizes the results and open directions.

\section{Background and the binary setting}\label{S2}

Throughout the paper, $p$ is a prime and $n,m$ are positive integers.  We write $\F_{p^r}$ for the finite field of $p^r$ elements, $\F_{p^r}^*=\F_{p^r}\setminus\{0\}$, and $\Tr_1^r(z)=z+z^p+\cdots+z^{p^{r-1}}$ for the absolute trace from $\F_{p^r}$ to $\Fp$.  We set $\zeta_p=e^{2\pi i/p}$. When there is no ambiguity, we write simply $\Tr$ for the relevant absolute trace.  For a condition $P$, $\1_{\{P\}}$ denotes its indicator, equal to $1$ when $P$ holds and to $0$ otherwise; equivalently, for a set $S$, $\1_S$ denotes its characteristic function.

Let $F:\F_{p^n}\longrightarrow\F_{p^m}$ be a vectorial $p$-ary function.  For $u\in\F_{p^n}$, its additive derivative in direction $u$ is
\begin{equation*}
 D_uF(x)=F(x+u)-F(x).
\end{equation*}
For $p=2$, subtraction and addition coincide, and $D_a$ is the usual derivative of a vectorial Boolean function.

The differential distribution table of $F$ is defined by
\begin{equation*}
 \DDT_F(u,b)=\#\{x\in\F_{p^n}:D_uF(x)=b\},
 \qquad u\in\F_{p^n},\ b\in\F_{p^m}.
\end{equation*}
The differential uniformity is
\[
 \delta_F=\max_{u\in\F_{p^n}^*,\,b\in\F_{p^m}}\DDT_F(u,b).
\]
When $m=n$, we call $F$ almost perfect nonlinear (APN) if $\delta_F=2$.  When $p$ is odd and $m=n$, we call $F$ planar (or perfect nonlinear) if $D_uF$ is a permutation of $\F_{p^n}$ for every $u\in\F_{p^n}^*$.

For $v\in\F_{p^m}$, the component of the derivative selected by $v$ is the $p$-ary function
\begin{equation}\label{eq:component-derivative}
 x\longmapsto \Tr_1^m\bigl(vD_uF(x)\bigr).
\end{equation}
Its additive autocorrelation is naturally expressed with additive characters.

\begin{definition}\label{def:ac}
For $u\in\F_{p^n}$ and $v\in\F_{p^m}$, define
\begin{equation}\label{eq:ac}
 \AC_F(u,v)
 =\sum_{x\in\F_{p^n}}
 \zeta_p^{\Tr_1^m(vD_uF(x))}.
\end{equation}
\end{definition}

In general $\AC_F(u,v)$ is a cyclotomic integer and need not be real when $p>2$.  We have $\AC_F(0,v)=p^n\ \forall v\in\F_{p^m},\qquad \AC_F(u,0)=p^n\ \forall u\in\F_{p^n}$, and
\begin{equation}\label{eq:conjugate}
 \AC_F(u,-v)=\overline{\AC_F(u,v)}.
\end{equation}

For later use we recall the standard orthogonality relation
\begin{equation}\label{eq:orthogonality}
 \frac1p\sum_{t\in\Fp}\zeta_p^{t(a-b)}
 =
 \begin{cases}
 1,&a=b,\\
 0,&a\ne b,
 \end{cases}
 \qquad a,b\in\Fp.
\end{equation}

The preceding notation applies in all prime characteristics.  We now recall the usual binary DLCT in the same trace notation, both to fix the normalization and to make clear what changes when $p>2$.  Let $F:\F_{2^n}\to\F_{2^m}$.  For $u\in\F_{2^n}$ and $v\in\F_{2^m}$, the DLCT entry is
\begin{equation}\label{eq:binary-dlct}
 \DLCT_F(u,v)
 =\#\left\{x\in\F_{2^n}:
 \Tr_1^m\bigl(vD_uF(x)\bigr)=0\right\}-2^{n-1}.
\end{equation}
Equivalently, in vector-space notation the trace component may be replaced by the usual inner product on $\F_2^m$.

Since a Boolean function takes only the values $0$ and $1$, the two level counts sum to $2^n$.  Therefore the zero-level bias determines the full distribution of the derivative component.  From~\eqref{eq:ac},
\begin{align*}
 \AC_F(u,v)
 &=\#\{x:\Tr(vD_uF(x))=0\}
   -\#\{x:\Tr(vD_uF(x))=1\}\\
 &=2\DLCT_F(u,v).
\end{align*}
Thus
\begin{equation}\label{eq:binary-ac-dlct}
 \DLCT_F(u,v)=\frac12\AC_F(u,v).
\end{equation}
This identity is the starting point of the binary spectral theory.  In particular, Canteaut et al.~\cite{CanteautEtAl19} use the autocorrelation formulation to derive Walsh-transform and DDT characterizations, moment identities and bounds, and to analyze equivalence properties.  These results provide the principal binary benchmark for the $p$-ary constructions below.

The binary differential-linear uniformity is commonly defined by
\begin{equation*}
 \gamma_F=
 \max_{u\in\F_{2^n}^*,\,v\in\F_{2^m}^*}
 \left|\DLCT_F(u,v)\right|.
\end{equation*}

\section{The \texorpdfstring{$p$}{p}-ary differential-linear profile and its Fourier representation}\label{S3}

For $p=2$, the zero-level count determines the entire distribution because the complementary level is forced.  This ceases to be true when $p>2$: the trace component~\eqref{eq:component-derivative} has $p$ possible values, and a single level count does not determine the others.  This observation suggests retaining the full distribution of the trace values rather than privileging the zero level from the outset.

\begin{definition}\label{def:level-count}
For $u\in\F_{p^n}$, $v\in\F_{p^m}$, and $j\in\Fp$, let
\begin{equation*}
 N_F(u,v;j)
 =\#\left\{x\in\F_{p^n}:
 \Tr_1^m\bigl(vD_uF(x)\bigr)=j\right\}.
\end{equation*}
The centered level count is the $p$-ary differential-linear table:
\begin{equation*}
 L_F(u,v;j)=N_F(u,v;j)-p^{n-1}.
\end{equation*}
We call $\mathcal L_F(u,v)=\bigl(L_F(u,v;j)\bigr)_{j\in\Fp}$ the \emph{$p$-ary differential-linear profile} of $F$ at $(u,v)$.
\end{definition}

Since the $p$ level sets in Definition~\ref{def:level-count} partition $\F_{p^n}$,
$\sum_{j\in\Fp}N_F(u,v;j)=p^n$.  Therefore
\begin{equation}\label{eq:profile-sum-zero}
 \sum_{j\in\Fp}L_F(u,v;j)=0.
\end{equation}
Thus the profile has at most $p-1$ independent entries.

The direct analogue of the usual binary DLCT is the zero-level entry.

\begin{definition}
\label{def:p-dlct-zero}
Define the zero-level $p$-ary DLCT entry by
\begin{equation}\label{eq:p-dlct-zero}
 \DLCT_F^{(p)}(u,v)
 :=L_F(u,v;0)
 =\#\left\{x\in\F_{p^n}:
 \Tr_1^m\bigl(vD_uF(x)\bigr)=0\right\}-p^{n-1}.
\end{equation}
\end{definition}

For $p=2$, \Cref{def:p-dlct-zero} is exactly~\eqref{eq:binary-dlct}.  For $p>2$, however, the zero-level entry alone does not determine the other $p-1$ level counts, so the binary identity with a single autocorrelation no longer holds.  The next proposition gives the precise Fourier relation between the full $p$-ary profile and the corresponding additive autocorrelations.

\begin{proposition}\label{prop:fourier-relation}
For every $u\in\F_{p^n}$, $v\in\F_{p^m}$ and $j\in\Fp$,
\begin{equation}
\label{eq:profile-from-ac}
 L_F(u,v;j)
 =\frac1p\sum_{t\in\Fp^*}
 \zeta_p^{-tj}\AC_F(u,tv).
\end{equation}
Conversely, for every $t\in\Fp^*$,
\begin{equation}
\label{eq:ac-from-profile}
 \AC_F(u,tv)
 =\sum_{j\in\Fp}L_F(u,v;j)\zeta_p^{tj}.
\end{equation}
\end{proposition}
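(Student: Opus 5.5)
The plan is to treat this as a finite Fourier transform on $\Fp$ of the level-count vector. The main tool is the orthogonality relation~\eqref{eq:orthogonality}. We prove~\eqref{eq:ac-from-profile} first, since it is immediate, and then invert it.

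For~\eqref{eq:ac-from-profile}, fix $u,v$ and $t\in\Fp^*$. Since $\Tr_1^m$ is $\Fp$-linear, $\Tr_1^m(tvD_uF(x))=t\Tr_1^m(vD_uF(x))$. Grouping the sum~\eqref{eq:ac} according to the value $j=\Tr_1^m(vD_uF(x))$ gives
\begin{equation*}
\AC_F(u,tv)=\sum_{j\in\Fp}N_F(u,v;j)\zeta_p^{tj}.
\end{equation*}
Now substitute $N_F(u,v;j)=L_F(u,v;j)+p^{n-1}$. The extra term is $p^{n-1}\sum_{j\in\Fp}\zeta_p^{tj}$, and this vanishes for $t\neq0$ by~\eqref{eq:orthogonality}. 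This yields~\eqref{eq:ac-from-profile}.

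For~\eqref{eq:profile-from-ac}, write the indicator of the level set using orthogonality:
\begin{equation*}
\1_{\{\Tr(vD_uF(x))=j\}}=\frac1p\sum_{t\in\Fp}\zeta_p^{t(\Tr(vD_uF(x))-j)}.
\end{equation*}
Sum over $x\in\F_{p^n}$ and use linearity of the trace again. This gives
\begin{equation*}
N_F(u,v;j)=\frac1p\sum_{t\in\Fp}\zeta_p^{-tj}\AC_F(u,tv).
\end{equation*}
The term $t=0$ contributes $\frac1p\AC_F(u,0)=p^{n-1}$, using $\AC_F(u,0)=p^n$. Subtracting $p^{n-1}$ leaves exactly the sum over $t\in\Fp^*$, which is~\eqref{eq:profile-from-ac}.

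There is no real obstacle here. The only points needing care are the identity $\Tr(tz)=t\Tr(z)$ for $t\in\Fp$, which makes the autocorrelations along the scalar orbit $tv$ appear, and the separate treatment of the $t=0$ terms, which is exactly what the centering by $p^{n-1}$ absorbs. Alternatively, either formula follows from the other by discrete Fourier inversion on $\Fp$, using~\eqref{eq:profile-sum-zero} for the missing $t=0$ coefficient. As a consistency check, for $p=2$ the first formula reduces to~\eqref{eq:binary-ac-dlct}.
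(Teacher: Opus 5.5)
Your proposal is correct and is essentially the paper's proof: the indicator expansion via the orthogonality relation, with the $t=0$ term absorbed by the centering $p^{n-1}$, gives \eqref{eq:profile-from-ac}, and grouping the autocorrelation sum by trace level, then discarding $p^{n-1}\sum_{j}\zeta_p^{tj}=0$, gives \eqref{eq:ac-from-profile}. The only difference is that you prove the two identities in the opposite order.
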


\begin{proof}
Fix $u$, $v$, and $j$.  Applying the orthogonality relation~\eqref{eq:orthogonality} with $a=\Tr_1^m\bigl(vD_uF(x)\bigr), b=j$, we obtain, for every $x\in\F_{p^n}$,
\[
 \1_{\{\Tr(vD_uF(x))=j\}}
 =\frac1p\sum_{t\in\Fp}
 \zeta_p^{t(\Tr(vD_uF(x))-j)}.
\]
Summing this identity over $x$ and interchanging the two finite sums yields
\begin{align*}
 N_F(u,v;j)
 &=\frac1p\sum_{t\in\Fp}\zeta_p^{-tj}
   \sum_{x\in\F_{p^n}}
   \zeta_p^{\Tr(tvD_uF(x))}.
\end{align*}
The contribution of $t=0$ is $\frac1p\sum_{x\in\F_{p^n}}1 =\frac{p^n}{p}=p^{n-1}$. For $t\ne0$, the inner sum is, by Definition~\ref{def:ac}, exactly $\AC_F(u,tv)$.  Hence
\[
 N_F(u,v;j)
 =p^{n-1}+\frac1p\sum_{t\in\Fp^*}
 \zeta_p^{-tj}\AC_F(u,tv).
\]
Using $L_F(u,v;j)=N_F(u,v;j)-p^{n-1}$, we get
\[
 L_F(u,v;j)
 =\frac1p\sum_{t\in\Fp^*}
 \zeta_p^{-tj}\AC_F(u,tv),
\]
which is~\eqref{eq:profile-from-ac}.

For the converse relation, group the terms in the autocorrelation sum according to the value of the trace component.  For $t\in\Fp^*$,
\begin{align*}
 \AC_F(u,tv)
 &=\sum_{j\in\Fp}N_F(u,v;j)\zeta_p^{tj}\\
 &=\sum_{j\in\Fp}\bigl(L_F(u,v;j)+p^{n-1}\bigr)\zeta_p^{tj}.
\end{align*}
Since $t\ne0$, the additive character $j\mapsto\zeta_p^{tj}$ is nontrivial and hence $\sum_{j\in\Fp}\zeta_p^{tj}=0$. The constant term therefore vanishes, leaving $\AC_F(u,tv)=\sum_{j\in\Fp}L_F(u,v;j)\zeta_p^{tj}$, which is~\eqref{eq:ac-from-profile}.
\end{proof}

Taking $j=0$ in (\ref{eq:profile-from-ac}) yields the promised replacement for the binary identity~\eqref{eq:binary-ac-dlct}.

\begin{corollary}\label{cor:zero-level-ac}
For all $u\in\F_{p^n}$ and $v\in\F_{p^m}$,
\begin{equation}
\label{eq:zero-level-ac}
 \DLCT_F^{(p)}(u,v)
 =\frac1p\sum_{t\in\Fp^*}\AC_F(u,tv).
\end{equation}
In particular, for $p=2$ this reduces to $\DLCT_F(u,v)=\frac12\AC_F(u,v)$. \end{corollary}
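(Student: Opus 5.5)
The plan is to specialize \Cref{prop:fourier-relation} to the zero level and then check the binary case directly. By \Cref{def:p-dlct-zero}, $\DLCT_F^{(p)}(u,v)=L_F(u,v;0)$. Setting $j=0$ in~\eqref{eq:profile-from-ac} makes every factor $\zeta_p^{-tj}$ equal to $\zeta_p^{0}=1$. The sum therefore collapses to
\[
 L_F(u,v;0)=\frac1p\sum_{t\in\Fp^*}\AC_F(u,tv),
\]
which is~\eqref{eq:zero-level-ac}. This holds for every $u$ and $v$, including $u=0$ or $v=0$, because \Cref{prop:fourier-relation} imposes no restriction on them.

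For the binary specialization, take $p=2$, so that $\Fp^*=\{1\}$. The sum then has the single term $t=1$, and the formula reads $\DLCT_F^{(2)}(u,v)=\tfrac12\AC_F(u,v)$. It remains to observe that \Cref{def:p-dlct-zero} with $p=2$ is literally~\eqref{eq:binary-dlct}, since the centering constant is $2^{n-1}$ and the level is zero. Hence $\DLCT_F^{(2)}=\DLCT_F$, and we recover~\eqref{eq:binary-ac-dlct}.

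There is no real obstacle. The only point needing care is this identification of $\DLCT_F^{(2)}$ with the binary $\DLCT_F$. One might also remark that the right-hand side of~\eqref{eq:zero-level-ac} is real, as it must be since the left side is an integer. Indeed, $\AC_F(u,-tv)=\overline{\AC_F(u,tv)}$ by~\eqref{eq:conjugate}, and $t\mapsto -t$ permutes $\Fp^*$, so the sum is unchanged under complex conjugation. This check is not needed for the proof.
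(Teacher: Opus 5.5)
Your proposal is correct and matches the paper's proof: both set $j=0$ in \eqref{eq:profile-from-ac} to get $\DLCT_F^{(p)}(u,v)=L_F(u,v;0)=\frac1p\sum_{t\in\Fp^*}\AC_F(u,tv)$, then use $\Fp^*=\{1\}$ for $p=2$. Your added checks (that $\DLCT_F^{(2)}$ is literally the binary $\DLCT_F$, and that the right-hand side is real) are correct but not needed.
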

\begin{proof}
By Definition~\ref{def:p-dlct-zero}, $\DLCT_F^{(p)}(u,v)=L_F(u,v;0)$.  Setting $j=0$ in~\eqref{eq:profile-from-ac} gives $L_F(u,v;0) =\frac1p\sum_{t\in\Fp^*}\AC_F(u,tv)$, which is~\eqref{eq:zero-level-ac}.  If $p=2$, then $\Fp^*=\{1\}$, and the sum contains the single term $\AC_F(u,v)$.
\end{proof}

\begin{remark}
\label{rem:orbit}
For $p>2$, the natural autocorrelation object associated with the one-dimensional $\Fp$-subspace $\Fp v\subseteq\F_{p^m}$ is the collection $\bigl(\AC_F(u,tv)\bigr)_{t\in\Fp^*}$, not a single coefficient.  \Cref{prop:fourier-relation} shows that this collection and the differential-linear profile $\mathcal L_F(u,v)$ contain exactly the same information.
\end{remark}

\begin{remark}
\label{rem:real}
Although the individual values $\AC_F(u,tv)$ may be nonreal, the quantities $L_F(u,v;j)$ are integers.  In~\eqref{eq:profile-from-ac}, the terms indexed by $t$ and $-t$ are complex conjugates by~\eqref{eq:conjugate}, so the right-hand side is real, as required.
\end{remark}

The same Fourier description also yields an exact Parseval identity.

\begin{proposition}
\label{prop:parseval-profile}
For every $u\in\F_{p^n}$ and $v\in\F_{p^m}$,
\begin{equation}\label{eq:parseval-profile}
 \sum_{j\in\Fp}|L_F(u,v;j)|^2
 =\frac1p\sum_{t\in\Fp^*}|\AC_F(u,tv)|^2.
\end{equation}
\end{proposition}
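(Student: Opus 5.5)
The identity is the Plancherel formula for the finite Fourier transform on $\Fp$, applied to the profile vector $\mathcal L_F(u,v)=(L_F(u,v;j))_{j\in\Fp}$. I would base everything on the converse relation~\eqref{eq:ac-from-profile} of \Cref{prop:fourier-relation}. For $t\in\Fp$, define
\[
\widehat L(t)=\sum_{j\in\Fp}L_F(u,v;j)\zeta_p^{tj}.
\]
By~\eqref{eq:ac-from-profile}, $\widehat L(t)=\AC_F(u,tv)$ for every $t\in\Fp^*$. By~\eqref{eq:profile-sum-zero}, $\widehat L(0)=\sum_j L_F(u,v;j)=0$.

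Next I compute $\sum_{t\in\Fp}|\widehat L(t)|^2$ in two ways. Expanding $|\widehat L(t)|^2=\widehat L(t)\overline{\widehat L(t)}$ and using that the $L_F(u,v;j)$ are real integers gives
\[
\sum_{t\in\Fp}|\widehat L(t)|^2=\sum_{j,k\in\Fp}L_F(u,v;j)L_F(u,v;k)\sum_{t\in\Fp}\zeta_p^{t(j-k)}.
\]
By the orthogonality relation~\eqref{eq:orthogonality}, the inner sum equals $p\,\1_{\{j=k\}}$. Hence the whole expression equals $p\sum_{j}|L_F(u,v;j)|^2$.

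On the other hand, the $t=0$ term vanishes because $\widehat L(0)=0$. So the left side is $\sum_{t\in\Fp^*}|\AC_F(u,tv)|^2$. Dividing by $p$ yields~\eqref{eq:parseval-profile}.

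There is no real obstacle. The only point needing care is the $t=0$ term, which is removed precisely by the centering, i.e.\ by~\eqref{eq:profile-sum-zero}. Without centering, one would obtain an extra $p^{2n}/p$ contribution from $\AC_F(u,0)=p^n$. I would remark that for $p=2$ the identity reduces to $2\,\DLCT_F(u,v)^2=\tfrac12\AC_F(u,v)^2$, consistent with~\eqref{eq:binary-ac-dlct}.
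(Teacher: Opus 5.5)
Your proposal is correct and is essentially the paper's own proof. You define the Fourier coefficients of the centered profile, use \eqref{eq:profile-sum-zero} to kill the $t=0$ term and \eqref{eq:ac-from-profile} to identify the other coefficients with $\AC_F(u,tv)$, and then apply Parseval via the orthogonality relation \eqref{eq:orthogonality}.
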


\begin{proof}
For fixed $u$ and $v$, write $L_j=L_F(u,v;j)$ and define the Fourier coefficients of the centered profile by $B_t=\sum_{j\in\Fp}L_j\zeta_p^{tj},\qquad t\in\Fp$. By~\eqref{eq:profile-sum-zero}, the coefficient at $t=0$ is $B_0=\sum_{j\in\Fp}L_j=0$.

For $t\in\Fp^*$, Proposition~\ref{prop:fourier-relation} gives $B_t=\AC_F(u,tv)$. Applying Parseval directly to the finite Fourier transform of $(L_j)_{j\in\Fp}$ gives
\begin{align*}
 \sum_{t\in\Fp}|B_t|^2
 =\sum_{t\in\Fp}
   \left(\sum_{j\in\Fp}L_j\zeta_p^{tj}\right)
   \left(\sum_{k\in\Fp}L_k\zeta_p^{-tk}\right)
 =\sum_{j,k\in\Fp}L_jL_k
   \sum_{t\in\Fp}\zeta_p^{t(j-k)}
 =p\sum_{j\in\Fp}|L_j|^2,
\end{align*}
where the last equality follows from~\eqref{eq:orthogonality}.  Since $B_0=0$ and $B_t=\AC_F(u,tv)$ for every $t\in\Fp^*$, we obtain
\[
 \sum_{j\in\Fp}|L_F(u,v;j)|^2
 =\frac1p\sum_{t\in\Fp^*}|\AC_F(u,tv)|^2,
\]
as claimed.
\end{proof}

The Fourier description also makes clear that the relevant mask is naturally considered up to multiplication by a nonzero prime-field scalar.  At the level of the profile, such a multiplication simply relabels the trace values.

\begin{proposition}
\label{prop:scalar-profile}
Let $a\in\Fp^*$.  Then
\begin{equation}
\label{eq:scalar-profile}
 L_F(u,av;j)=L_F(u,v;a^{-1}j)
 \qquad (j\in\Fp).
\end{equation}
Consequently, the multiset $\{L_F(u,v;j):j\in\Fp\}$ depends only on the one-dimensional $\Fp$-subspace~$v\Fp$.
\end{proposition}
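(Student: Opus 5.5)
The plan is to argue directly from \Cref{def:level-count}, with no Fourier analysis. The key observation is that $a\in\Fp^*$ is fixed under the Frobenius map, so the absolute trace is $\Fp$-linear:
\[
\Tr_1^m\bigl(avD_uF(x)\bigr)=a\,\Tr_1^m\bigl(vD_uF(x)\bigr)
\qquad\text{for every } x\in\F_{p^n}.
\]
Because $a$ is invertible in $\Fp$, the condition $\Tr_1^m(avD_uF(x))=j$ holds exactly when $\Tr_1^m(vD_uF(x))=a^{-1}j$. The two level sets therefore coincide as subsets of $\F_{p^n}$, and so
\[
N_F(u,av;j)=N_F(u,v;a^{-1}j).
\]
Subtracting $p^{n-1}$ from both sides then gives \eqref{eq:scalar-profile}.

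As a consistency check, one can also derive the identity from \eqref{eq:profile-from-ac}. Write $L_F(u,av;j)=\frac1p\sum_{t\in\Fp^*}\zeta_p^{-tj}\AC_F(u,tav)$ and substitute $s=ta$. As $t$ runs over $\Fp^*$, so does $s$, and the exponent becomes $-tj=-s(a^{-1}j)$. The resulting sum is exactly the expression for $L_F(u,v;a^{-1}j)$. I would use the direct counting argument in the proof and mention this only as a remark.

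For the consequence, take any generator $av$ of the line $v\Fp$ with $a\in\Fp^*$. The map $j\mapsto a^{-1}j$ is a bijection of $\Fp$, so the multiset $\{L_F(u,av;j):j\in\Fp\}$ is just a reindexing of $\{L_F(u,v;j):j\in\Fp\}$. Hence the multiset depends only on the subspace $v\Fp$.

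The degenerate case $v=0$ needs no separate treatment: the subspace is $\{0\}$, and the identity holds trivially. Nothing here is a real obstacle. The only point to state carefully is the $\Fp$-linearity of the trace, which uses $a^p=a$ for $a\in\Fp$.
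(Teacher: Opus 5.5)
Your proposal is correct and follows the paper's proof: you use the $\Fp$-linearity of the trace to identify the level sets, so that $N_F(u,av;j)=N_F(u,v;a^{-1}j)$. You then center by $p^{n-1}$ and obtain the multiset claim from the bijection $j\mapsto a^{-1}j$ of $\Fp$. Your Fourier-side consistency check via the substitution $s=ta$ in \eqref{eq:profile-from-ac} is also valid, but it is not needed.
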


\begin{proof}
Since $a\in\Fp$ and the absolute trace is $\Fp$-linear,$ \Tr_1^m\bigl(avD_uF(x)\bigr)
 =a\Tr_1^m\bigl(vD_uF(x)\bigr)$.
Since $a\ne0$, multiplication by $a$ is a permutation of $\Fp$, and therefore
\[
 \Tr_1^m\bigl(avD_uF(x)\bigr)=j
 \quad\Longleftrightarrow\quad
 \Tr_1^m\bigl(vD_uF(x)\bigr)=a^{-1}j.
\]
The two conditions define the same subset of $\F_{p^n}$ after relabeling the trace value, and therefore $N_F(u,av;j)=N_F(u,v;a^{-1}j)$. Centering both sides by the common uniform value $p^{n-1}$ gives
\[
 \begin{aligned}
 L_F(u,av;j)
 &=N_F(u,av;j)-p^{n-1}\\
 &=N_F(u,v;a^{-1}j)-p^{n-1}\\
 &=L_F(u,v;a^{-1}j),
 \end{aligned}
\]
which is~\eqref{eq:scalar-profile}.  Finally, the map $j\mapsto a^{-1}j$ is a permutation of $\Fp$, so the two profile multisets coincide.
\end{proof}

The profile can be compressed into several natural differential-linear parameters.  For $p>2$, these quantities need not coincide, and the choice should ultimately reflect the intended cryptanalytic interpretation.  The quantity closest to the classical binary DLCT is obtained by retaining only the zero level,
\begin{equation}
\label{eq:zero-dlu}
 \gamma_F^{(p,0)}
 =\max_{u\in\F_{p^n}^*,\,v\in\F_{p^m}^*}
 \left|L_F(u,v;0)\right|.
\end{equation}
A definition intrinsic to the full profile is obtained by allowing all trace levels,
\begin{equation}
\label{eq:profile-dlu}
 \Gamma_F^{(p)}
 =\max_{u\in\F_{p^n}^*,\,v\in\F_{p^m}^*,\,j\in\Fp}
 \left|L_F(u,v;j)\right|.
\end{equation}
For $p=2$, the two definitions coincide because $L_F(u,v;1)=-L_F(u,v;0)$. For $p>2$, the two parameters need not coincide.  We retain both: $\gamma_F^{(p,0)}$ measures the distinguished zero level, whereas $\Gamma_F^{(p)}$ measures the full profile and, as shown in Section~\ref{sec:equivalence}, has the natural EA-invariance property.

Since the profile and the autocorrelation orbit are Fourier-equivalent, one may instead measure the latter directly, for example by
\begin{equation}
\label{eq:ac-orbit-parameter}
 A_F^{(p)}
 =\max_{u\in\F_{p^n}^*,\,v\in\F_{p^m}^*}
 \max_{t\in\Fp^*}|\AC_F(u,tv)|.
\end{equation}
The two Fourier formulas in \Cref{prop:fourier-relation} give explicit comparison bounds.  Indeed, from~\eqref{eq:profile-from-ac},
\[
 |L_F(u,v;j)|
 \le \frac1p\sum_{t\in\Fp^*}|\AC_F(u,tv)|
 \le \frac{p-1}{p}A_F^{(p)},
\]
whereas~\eqref{eq:ac-from-profile} gives
\[
 |\AC_F(u,tv)|
 \le \sum_{j\in\Fp}|L_F(u,v;j)|
 \le p\Gamma_F^{(p)}.
\]
Taking maxima in the two inequalities yields
\begin{equation*}
 \frac1p A_F^{(p)}
 \le \Gamma_F^{(p)}
 \le \frac{p-1}{p}A_F^{(p)}.
\end{equation*}
For $p=2$, both inequalities are equalities and recover the usual relation $\DLCT=\AC/2$.  For $p>2$, determining which normalization is most useful in a differential-linear attack requires a separate cryptanalytic analysis.

\section{DDT structure and moment identities}
\label{sec:ddt-structure}

The profile is closely tied to the differential distribution table.  Each level count is obtained by aggregating a DDT row over an affine trace hyperplane determined by the output mask, and the subsequent moment identity shows that this relation controls the total profile energy exactly.  The first identity makes the hyperplane-sum relation explicit.

\begin{proposition}
\label{prop:profile-ddt}
For every $u\in\F_{p^n}$, $v\in\F_{p^m}$, and $j\in\Fp$,
\begin{equation}
\label{eq:profile-ddt}
 N_F(u,v;j)
 =\sum_{\substack{b\in\F_{p^m}\\ \Tr_1^m(vb)=j}}
 \DDT_F(u,b).
\end{equation}
Consequently,
\begin{equation}
\label{eq:centered-profile-ddt}
 L_F(u,v;j)
 =\sum_{\substack{b\in\F_{p^m}\\ \Tr_1^m(vb)=j}}
 \DDT_F(u,b)-p^{n-1}.
\end{equation}
\end{proposition}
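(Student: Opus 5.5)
The plan is a direct double-counting argument: partition the inputs $x\in\F_{p^n}$ according to the value of the derivative $D_uF(x)$, then regroup by trace level.

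Fix $u$, $v$ and $j$. The fibres
\[
 S_b=\{x\in\F_{p^n}: D_uF(x)=b\},\qquad b\in\F_{p^m},
\]
are pairwise disjoint and cover $\F_{p^n}$. By definition of the difference distribution table, $\#S_b=\DDT_F(u,b)$. For $x\in S_b$ we have $\Tr_1^m(vD_uF(x))=\Tr_1^m(vb)$. Hence the condition defining $N_F(u,v;j)$ in Definition~\ref{def:level-count} depends only on the fibre containing $x$. The level set $\{x:\Tr_1^m(vD_uF(x))=j\}$ is therefore the disjoint union of those $S_b$ with $\Tr_1^m(vb)=j$. Taking cardinalities gives~\eqref{eq:profile-ddt}.

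For~\eqref{eq:centered-profile-ddt}, subtract $p^{n-1}$ from both sides and use $L_F(u,v;j)=N_F(u,v;j)-p^{n-1}$.

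There is no real obstacle here; the only care needed is the degenerate cases. When $v=0$, the trace condition $\Tr_1^m(0\cdot b)=j$ holds for all $b$ if $j=0$ and for none otherwise. The formula still holds, giving $N_F(u,0;0)=\sum_b\DDT_F(u,b)=p^n$, which is consistent with $\AC_F(u,0)=p^n$. The case $u=0$ needs no separate treatment, since the argument is uniform in $u$ and $v$.

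As a cross-check one could instead start from Proposition~\ref{prop:fourier-relation}: write $\AC_F(u,tv)=\sum_b\DDT_F(u,b)\zeta_p^{\Tr(tvb)}$ and apply~\eqref{eq:orthogonality}. The combinatorial route above is shorter, so I would present that.
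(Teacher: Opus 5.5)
Your proposal is correct and is essentially the paper's own proof: partition $\F_{p^n}$ into the derivative fibres $\{x: D_uF(x)=b\}$, note that the trace condition holds exactly when $b$ lies in the affine hyperplane $\{b:\Tr_1^m(vb)=j\}$, sum the fibre sizes, and subtract $p^{n-1}$. Your remarks on the degenerate cases $v=0$ and $u=0$ are accurate but not needed, since the argument is uniform.
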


\begin{proof}
For fixed $u$, the sets $\{x\in\F_{p^n}:D_uF(x)=b\},\qquad b\in\F_{p^m}$, form a partition of $\F_{p^n}$.  The condition $\Tr_1^m\bigl(vD_uF(x)\bigr)=j$ holds precisely when the derivative value $b=D_uF(x)$ lies in the affine hyperplane $H_{v,j}=\{b\in\F_{p^m}:\Tr_1^m(vb)=j\}$. Therefore
\[
 \begin{aligned}
 N_F(u,v;j)
 &=\sum_{b\in H_{v,j}}
   \#\{x\in\F_{p^n}:D_uF(x)=b\}\\
 &=\sum_{\substack{b\in\F_{p^m}\\\Tr_1^m(vb)=j}}
   \DDT_F(u,b),
 \end{aligned}
\]
which is~\eqref{eq:profile-ddt}.  Since $L_F(u,v;j)=N_F(u,v;j)-p^{n-1}$, the same identity gives
\[
 L_F(u,v;j)
 =\sum_{\substack{b\in\F_{p^m}\\\Tr_1^m(vb)=j}}
   \DDT_F(u,b)-p^{n-1},
\]
namely~\eqref{eq:centered-profile-ddt}.
\end{proof}

Thus the $p$-ary differential-linear profile measures how far a DDT row is from being uniformly distributed across the parallel affine trace hyperplanes associated with $v$.  The relation is in fact lossless when all nonzero output masks are retained: the complete family of profiles for a fixed input difference determines the corresponding DDT row exactly.

\begin{theorem}
\label{thm:profile-ddt-inversion}
Fix $u\in\F_{p^n}$.  The complete collection $\{\mathcal L_F(u,v):v\in\F_{p^m}^*\}$ determines the $u$-th row of the differential distribution table.  More precisely, for every $b\in\F_{p^m}$,
\begin{equation}
\label{eq:profile-ddt-inversion}
 \DDT_F(u,b)
 =p^{n-m}+\frac1{p^m}
 \sum_{v\in\F_{p^m}^*}
 \zeta_p^{-\Tr_1^m(vb)}
 \sum_{j\in\Fp}L_F(u,v;j)\zeta_p^j.
\end{equation}
Conversely, the $u$-th DDT row determines every profile $\mathcal L_F(u,v)$ through~\eqref{eq:profile-ddt}.  Hence, for fixed $u$, the DDT row and the complete family of level-resolved profiles determine each other.
\end{theorem}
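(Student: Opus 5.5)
The plan is to pass through the additive autocorrelation and apply Fourier inversion on $\F_{p^m}$.

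First, by \eqref{eq:ac-from-profile} with $t=1$, the inner sum in \eqref{eq:profile-ddt-inversion} is exactly the autocorrelation:
\[
\sum_{j\in\Fp}L_F(u,v;j)\zeta_p^j=\AC_F(u,v).
\]
So it suffices to show
\[
\DDT_F(u,b)=p^{n-m}+p^{-m}\sum_{v\ne0}\zeta_p^{-\Tr(vb)}\AC_F(u,v).
\]

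Second, I will express $\AC_F(u,v)$ through the DDT row. Grouping the sum \eqref{eq:ac} according to the value $c=D_uF(x)$ gives
\[
\AC_F(u,v)=\sum_{c\in\F_{p^m}}\DDT_F(u,c)\,\zeta_p^{\Tr_1^m(vc)}.
\]
This is the same grouping argument as in \Cref{prop:profile-ddt}. Thus $v\mapsto\AC_F(u,v)$ is the additive Fourier transform of the DDT row.

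Third, I multiply by $\zeta_p^{-\Tr(vb)}$ and sum over all $v\in\F_{p^m}$, interchanging the finite sums. The character orthogonality
\[
\sum_{v\in\F_{p^m}}\zeta_p^{\Tr_1^m(v(c-b))}=p^m\1_{\{c=b\}}
\]
then yields
\[
p^m\DDT_F(u,b)=\sum_{v\in\F_{p^m}}\zeta_p^{-\Tr(vb)}\AC_F(u,v).
\]
This orthogonality is the $\F_{p^m}$-analogue of \eqref{eq:orthogonality}, and it holds because the trace form is nondegenerate. The $v=0$ term equals $\AC_F(u,0)=p^n$. Splitting it off and dividing by $p^m$ gives \eqref{eq:profile-ddt-inversion}.

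Fourth, the converse direction is immediate from \Cref{prop:profile-ddt}: \eqref{eq:centered-profile-ddt} computes each $L_F(u,v;j)$ from the $u$-th row alone. The two determinations together give the mutual-determination statement.

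There is no real obstacle. The only point needing care is justifying orthogonality for the characters $\zeta_p^{\Tr(v\cdot)}$ over $\F_{p^m}$. For $c\ne b$, the map $v\mapsto\Tr(v(c-b))$ is a surjective $\Fp$-linear functional, so each value in $\Fp$ is attained exactly $p^{m-1}$ times. The sum therefore equals $p^{m-1}\sum_{a\in\Fp}\zeta_p^a=0$.
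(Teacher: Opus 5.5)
Your proposal is correct and follows essentially the same route as the paper. You write $\AC_F(u,\cdot)$ as the additive Fourier transform of the $u$-th DDT row, invert it by character orthogonality over $\F_{p^m}$, split off the $v=0$ term $\AC_F(u,0)=p^n$, identify the remaining terms via \eqref{eq:ac-from-profile} with $t=1$, and obtain the converse from Proposition~\ref{prop:profile-ddt}; the only difference is that you substitute the profile–autocorrelation identity first rather than last.
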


\begin{proof}
Fix $u\in\F_{p^n}$ and write $d_u(b)=\DDT_F(u,b),\qquad b\in\F_{p^m}$. By definition, $d_u(b)$ is the number of $x\in\F_{p^n}$ for which $D_uF(x)=b$.  Hence, for every $v\in\F_{p^m}$, we may partition the autocorrelation sum according to the value $b=D_uF(x)$:
\begin{align}
 \AC_F(u,v)
 =\sum_{x\in\F_{p^n}}
   \zeta_p^{\Tr_1^m(vD_uF(x))} 
 =\sum_{b\in\F_{p^m}}
   \sum_{\substack{x\in\F_{p^n}\\D_uF(x)=b}}
   \zeta_p^{\Tr_1^m(vb)} 
 =\sum_{b\in\F_{p^m}}
   d_u(b)\zeta_p^{\Tr_1^m(vb)}.
\label{eq:ac-ddt-fourier}
\end{align}

We now invert this finite Fourier transform directly.  For fixed $b_0\in\F_{p^m}$, multiply~\eqref{eq:ac-ddt-fourier} by
$\zeta_p^{-\Tr_1^m(vb_0)}$ and sum over $v\in\F_{p^m}$,
\begin{align*}
 \sum_{v\in\F_{p^m}}
 \AC_F(u,v)\zeta_p^{-\Tr_1^m(vb_0)}
 =\sum_{v\in\F_{p^m}}
   \sum_{b\in\F_{p^m}}
   d_u(b)\zeta_p^{\Tr_1^m(v(b-b_0))}
 =\sum_{b\in\F_{p^m}}d_u(b)
   \sum_{v\in\F_{p^m}}
   \zeta_p^{\Tr_1^m(v(b-b_0))}.
\end{align*}
The trace pairing $(v,c)\mapsto\Tr_1^m(vc)$ is nondegenerate.  Therefore the character
$v\mapsto\zeta_p^{\Tr_1^m(vc)}$ is trivial exactly when $c=0$.  By additive-character orthogonality,
\[
 \sum_{v\in\F_{p^m}}
 \zeta_p^{\Tr_1^m(vc)}
 =\begin{cases}
 p^m,&c=0,\\
 0,&c\ne0.
 \end{cases}
\]
Only the term $b=b_0$ survives, and consequently
\begin{equation}
\label{eq:ddt-fourier-inversion}
 d_u(b_0)
 =\frac1{p^m}\sum_{v\in\F_{p^m}}
 \AC_F(u,v)\zeta_p^{-\Tr_1^m(vb_0)}.
\end{equation}

We next separate the term $v=0$.  Since every summand in $\AC_F(u,0)$ is equal to $1$, $\AC_F(u,0)=p^n$, so its contribution to~\eqref{eq:ddt-fourier-inversion} is $p^{n-m}$.  For every $v\ne0$, Proposition~\ref{prop:fourier-relation}, with $t=1$, gives $\AC_F(u,v)=\sum_{j\in\Fp}L_F(u,v;j)\zeta_p^j$. We now split the sum in~\eqref{eq:ddt-fourier-inversion} into the term $v=0$ and the nonzero masks,
\[
 \begin{aligned}
 d_u(b_0)
 &=\frac1{p^m}\AC_F(u,0)
   +\frac1{p^m}\sum_{v\in\F_{p^m}^*}
    \AC_F(u,v)\zeta_p^{-\Tr_1^m(vb_0)}\\
 &=p^{n-m}
   +\frac1{p^m}\sum_{v\in\F_{p^m}^*}
    \left(\sum_{j\in\Fp}L_F(u,v;j)\zeta_p^j\right)
    \zeta_p^{-\Tr_1^m(vb_0)}\\
 &=p^{n-m}
   +\frac1{p^m}\sum_{v\in\F_{p^m}^*}
    \zeta_p^{-\Tr_1^m(vb_0)}
    \sum_{j\in\Fp}L_F(u,v;j)\zeta_p^j.
 \end{aligned}
\]
Recalling that $d_u(b_0)=\DDT_F(u,b_0)$ gives exactly~\eqref{eq:profile-ddt-inversion}.

Thus the complete set of profiles for nonzero masks determines every entry of the $u$-th DDT row.  Conversely, Proposition~\ref{prop:profile-ddt} expresses each level count $N_F(u,v;j)$ as a sum of entries of that row over the trace hyperplane $\Tr_1^m(vb)=j$; subtracting $p^{n-1}$ then determines $L_F(u,v;j)$.  Hence the two collections determine each other.
\end{proof}

Theorem~\ref{thm:profile-ddt-inversion} strengthens the hyperplane-sum interpretation: an individual profile records only the masses of a DDT row on one parallel class of trace hyperplanes, but the profiles over all nonzero masks recover the row completely.  In characteristic two, this is consistent with the Fourier relation between autocorrelation and DDT rows established in~\cite{CanteautEtAl19}.  Here the level-resolved formulation shows directly how the same information is distributed among the $p$ trace levels.  This also places the profile in the same general line of inquiry as other connectivity tables that can be controlled through DDT information~\cite{GargEtAl25BCT}, while retaining a simple linear-geometric description.

A first consequence is obtained when a derivative is balanced: uniformity on the output space forces uniformity on every nonzero trace component.

\begin{proposition}
\label{prop:balanced-derivative}
Fix $u\ne0$.  Suppose that $D_uF:\F_{p^n}\to\F_{p^m}$ is balanced, i.e., every value in $\F_{p^m}$ has exactly $p^{n-m}$ preimages.  Then, for every $v\in\F_{p^m}^*$ and every $j\in\Fp$,
 $L_F(u,v;j)=0$.
Equivalently, $\AC_F(u,tv)=0$ for all $t\in\Fp^*$, $v\in\F_{p^m}^*$.
\end{proposition}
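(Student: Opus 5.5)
The plan is to use the hyperplane-sum formula of \Cref{prop:profile-ddt}. Balancedness of $D_uF$ means $\DDT_F(u,b)=p^{n-m}$ for every $b\in\F_{p^m}$. Substituting this into~\eqref{eq:profile-ddt} gives
\[
 N_F(u,v;j)=p^{n-m}\,\#H_{v,j},
 \qquad H_{v,j}=\{b\in\F_{p^m}:\Tr_1^m(vb)=j\}.
\]
So everything reduces to counting the points of the affine trace hyperplane $H_{v,j}$.

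For $v\ne0$, the map $b\mapsto\Tr_1^m(vb)$ is a nonzero $\Fp$-linear functional on $\F_{p^m}$. It is nonzero because the trace pairing is nondegenerate, as already used in the proof of \Cref{thm:profile-ddt-inversion}; equivalently, $b\mapsto vb$ is a bijection and $\Tr_1^m$ is surjective onto $\Fp$. Therefore the functional is surjective, and each fiber $H_{v,j}$ is a coset of a hyperplane, so $\#H_{v,j}=p^{m-1}$ for every $j\in\Fp$.

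It follows that $N_F(u,v;j)=p^{n-m}p^{m-1}=p^{n-1}$, and hence $L_F(u,v;j)=0$ by~\eqref{eq:centered-profile-ddt}.

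For the equivalent autocorrelation statement, note that $tv\ne0$ whenever $t\in\Fp^*$ and $v\in\F_{p^m}^*$. Formula~\eqref{eq:ac-from-profile} then gives $\AC_F(u,tv)=\sum_j L_F(u,v;j)\zeta_p^{tj}=0$.

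The converse direction of ``equivalently'' also holds. If all $\AC_F(u,tv)$ vanish, then~\eqref{eq:profile-from-ac} forces every $L_F(u,v;j)$ to be zero. I will state this in one line.

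There is no real obstacle. The only point needing care is justifying that the hyperplane count is exactly $p^{m-1}$, and that follows directly from the surjectivity of the nonzero linear functional.
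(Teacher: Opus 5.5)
Your proposal is correct and follows essentially the same route as the paper: substitute the constant DDT row $p^{n-m}$ into the hyperplane-sum formula, use that $b\mapsto\Tr_1^m(vb)$ is a nonzero, hence surjective, linear functional with fibers of size $p^{m-1}$, and then read off $\AC_F(u,tv)=0$ from~\eqref{eq:ac-from-profile}. Your extra line giving the converse via~\eqref{eq:profile-from-ac} is valid; the paper's proof does not include it.
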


\begin{proof}
Fix $v\in\F_{p^m}^*$.  The map
\[
 \ell_v:\F_{p^m}\longrightarrow\Fp,
 \qquad \ell_v(b)=\Tr_1^m(vb),
\]
is a nonzero $\Fp$-linear functional.  It is therefore surjective, its kernel has dimension $m-1$, and each fiber $\ell_v^{-1}(j)$ contains exactly $p^{m-1}$ elements.

By hypothesis, every $b\in\F_{p^m}$ has exactly $p^{n-m}$ preimages under $D_uF$.  Hence, for any $j\in\Fp$,
\begin{align*}
 N_F(u,v;j)
 &=\sum_{\substack{b\in\F_{p^m}\\ \Tr(vb)=j}}
   \#\{x\in\F_{p^n}:D_uF(x)=b\}\\
 &=p^{m-1}p^{n-m}=p^{n-1}.
\end{align*}
Thus $L_F(u,v;j)=0$ for every $j$.  To obtain the autocorrelation statement, fix $t\in\Fp^*$.  Formula~\eqref{eq:ac-from-profile} now reads
\[
 \AC_F(u,tv)
 =\sum_{j\in\Fp}L_F(u,v;j)\zeta_p^{tj}
 =\sum_{j\in\Fp}0\cdot\zeta_p^{tj}
 =0.
\]
Since $t$ was arbitrary, this holds for every $t\in\Fp^*$, which proves the equivalent formulation.
\end{proof}

The DDT relation becomes substantially stronger after taking second moments.  For a fixed derivative direction $u$, write
\begin{equation*}
 E_F(u)=\sum_{b\in\F_{p^m}}\DDT_F(u,b)^2.
\end{equation*}
This is the collision energy of the $u$-th DDT row.

\begin{theorem}
\label{thm:profile-ddt-energy}
For every $u\in\F_{p^n}$,
\begin{align}
 \sum_{v\in\F_{p^m}^*}\sum_{j\in\Fp}
 |L_F(u,v;j)|^2
 &=\frac{p-1}{p}
 \left(p^m E_F(u)-p^{2n}\right) \label{eq:profile-ddt-energy}\\
 &=(p-1)p^{m-1}
 \sum_{b\in\F_{p^m}}
 \left(\DDT_F(u,b)-p^{n-m}\right)^2. \label{eq:profile-ddt-distance}
\end{align}
More generally, if
\begin{equation}
\label{eq:row-diff-spectrum}
 \nu_k(u)=\#\{b\in\F_{p^m}:\DDT_F(u,b)=k\},
\end{equation}
denotes the differential spectrum of the $u$-th DDT row, then
\begin{equation}
\label{eq:profile-row-spectrum}
 \sum_{v\in\F_{p^m}^*}\sum_{j\in\Fp}|L_F(u,v;j)|^2
 =(p-1)p^{m-1}
 \sum_{k\ge0}\nu_k(u)\left(k-p^{n-m}\right)^2.
\end{equation}
Consequently, for every nonzero $u$,
\begin{equation}
\label{eq:gamma-energy-lower}
 (\Gamma_F^{(p)})^2
 \ge
 \frac{p-1}{p^2(p^m-1)}
 \left(p^m E_F(u)-p^{2n}\right),
\end{equation}
and hence the same bound holds with the right-hand side maximized over $u\ne0$.
\end{theorem}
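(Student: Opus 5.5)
The plan is to combine the Parseval identity of \Cref{prop:parseval-profile} with the Fourier expansion of the autocorrelation in terms of the DDT row, namely \eqref{eq:ac-ddt-fourier}, and then apply Parseval a second time on $\F_{p^m}$. First, I sum \eqref{eq:parseval-profile} over $v\in\F_{p^m}^*$:
\[
 \sum_{v\ne0}\sum_{j}|L_F(u,v;j)|^2=\frac1p\sum_{v\ne0}\sum_{t\in\Fp^*}|\AC_F(u,tv)|^2 .
\]
For each fixed $t\in\Fp^*$, the map $v\mapsto tv$ permutes $\F_{p^m}^*$, so the inner double sum is just $(p-1)\sum_{w\ne0}|\AC_F(u,w)|^2$. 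The left side therefore equals $\frac{p-1}{p}\sum_{w\ne0}|\AC_F(u,w)|^2$.

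Next I use \eqref{eq:ac-ddt-fourier}, which says $\AC_F(u,w)=\sum_b d_u(b)\zeta_p^{\Tr(wb)}$. Expanding $|\AC_F(u,w)|^2$ as a double sum over $b,b'$ and summing over all $w\in\F_{p^m}$, orthogonality (the same nondegeneracy argument used in the proof of \Cref{thm:profile-ddt-inversion}) gives
\[
 \sum_{w\in\F_{p^m}}|\AC_F(u,w)|^2=p^mE_F(u).
\]
Removing the $w=0$ term, which is $\AC_F(u,0)^2=p^{2n}$, yields \eqref{eq:profile-ddt-energy}.

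For \eqref{eq:profile-ddt-distance}, I expand $\sum_b(d_u(b)-p^{n-m})^2$ using $\sum_b d_u(b)=p^n$. This gives $E_F(u)-2p^{n-m}p^n+p^m p^{2n-2m}=E_F(u)-p^{2n-m}$. Multiplying by $(p-1)p^{m-1}$ recovers $\frac{p-1}{p}(p^mE_F(u)-p^{2n})$. Identity \eqref{eq:profile-row-spectrum} then follows by grouping the $b$'s according to the value $k=d_u(b)$.

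Finally, for the bound \eqref{eq:gamma-energy-lower}, I observe that when $u\ne0$ the left side of \eqref{eq:profile-ddt-energy} has $(p^m-1)p$ terms, each at most $(\Gamma_F^{(p)})^2$. Dividing by $(p^m-1)p$ gives the stated inequality, and maximizing over $u\ne0$ gives the final remark. None of these steps is a real obstacle. The only point needing care is the reindexing $v\mapsto tv$, together with the observation that $w=0$ never arises from a nonzero $v$; this is what produces the factor $p-1$.
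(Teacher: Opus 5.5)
Your proposal is correct and follows essentially the same route as the paper. You sum Parseval over $v\ne0$, reindex by $w=tv$ to get the factor $p-1$, evaluate $\sum_{w}|\AC_F(u,w)|^2=p^mE_F(u)$ by orthogonality, remove $w=0$, and finish with the centered expansion and the term-counting bound. The only cosmetic difference is that you expand $|\AC_F(u,w)|^2$ over pairs of derivative values $(b,b')$ via \eqref{eq:ac-ddt-fourier}, whereas the paper expands over input pairs $(x,y)$ and counts collisions $D_uF(x)=D_uF(y)$; this is the same computation.
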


\begin{proof}
Fix $u\in\F_{p^n}$ and set $S(u)=\sum_{v\in\F_{p^m}^*}\sum_{j\in\Fp}|L_F(u,v;j)|^2$. For each fixed $v\ne0$, Proposition~\ref{prop:parseval-profile} gives
\[
 \sum_{j\in\Fp}|L_F(u,v;j)|^2
 =\frac1p\sum_{t\in\Fp^*}|\AC_F(u,tv)|^2.
\]
Hence
\begin{align*}
 S(u)
 &=\frac1p\sum_{v\in\F_{p^m}^*}
   \sum_{t\in\Fp^*}|\AC_F(u,tv)|^2.
\end{align*}
For a fixed $w\in\F_{p^m}^*$ and each $t\in\Fp^*$ there is a unique
$v=t^{-1}w\in\F_{p^m}^*$ satisfying $w=tv$.  Thus every nonzero $w$ occurs exactly $p-1$ times in the double sum, and
\begin{equation}
\label{eq:energy-orbit-count}
 S(u)=\frac{p-1}{p}
 \sum_{w\in\F_{p^m}^*}|\AC_F(u,w)|^2.
\end{equation}

We now evaluate the full autocorrelation energy.  Since complex conjugation sends $\zeta_p^a$ to $\zeta_p^{-a}$,
\begin{align*}
 |\AC_F(u,w)|^2
 &=\left(\sum_{x\in\F_{p^n}}
   \zeta_p^{\Tr_1^m(wD_uF(x))}\right)
   \left(\sum_{y\in\F_{p^n}}
   \zeta_p^{-\Tr_1^m(wD_uF(y))}\right)\\
 &=\sum_{x,y\in\F_{p^n}}
   \zeta_p^{\Tr_1^m(w(D_uF(x)-D_uF(y)))}.
\end{align*}
Summing over $w\in\F_{p^m}$ and interchanging the finite sums gives
\begin{align*}
 \sum_{w\in\F_{p^m}}|\AC_F(u,w)|^2
 &=\sum_{x,y\in\F_{p^n}}
   \sum_{w\in\F_{p^m}}
   \zeta_p^{\Tr_1^m(w(D_uF(x)-D_uF(y)))}.
\end{align*}
By nondegeneracy of the trace pairing and additive-character orthogonality, the inner sum is $p^m$ when
$D_uF(x)=D_uF(y)$ and is $0$ otherwise.  Therefore
\begin{equation}
\label{eq:ac-energy-collisions}
 \sum_{w\in\F_{p^m}}|\AC_F(u,w)|^2
 =p^m\#\{(x,y)\in\F_{p^n}^2:D_uF(x)=D_uF(y)\}.
\end{equation}
For each $b\in\F_{p^m}$, there are exactly $\DDT_F(u,b)$ choices of $x$ and independently the same number of choices of $y$ with derivative value $b$.  Consequently,
\[
 \#\{(x,y):D_uF(x)=D_uF(y)\}
 =\sum_{b\in\F_{p^m}}\DDT_F(u,b)^2
 =E_F(u).
\]
The identity~\eqref{eq:ac-energy-collisions} thus becomes $\sum_{w\in\F_{p^m}}|\AC_F(u,w)|^2=p^mE_F(u)$. Since $\AC_F(u,0)=p^n$, the $w=0$ term has squared magnitude $p^{2n}$, and hence $\sum_{w\in\F_{p^m}^*}|\AC_F(u,w)|^2 =p^mE_F(u)-p^{2n}$. Combining this identity with~\eqref{eq:energy-orbit-count} gives
\[
 \begin{aligned}
 S(u)
 &=\frac{p-1}{p}
   \sum_{w\in\F_{p^m}^*}|\AC_F(u,w)|^2\\
 &=\frac{p-1}{p}\bigl(p^mE_F(u)-p^{2n}\bigr),
 \end{aligned}
\]
which is~\eqref{eq:profile-ddt-energy}.

We next rewrite the result in centered form.  Put $d_b=\DDT_F(u,b)$.  Because the DDT row counts all $p^n$ inputs, $\sum_{b\in\F_{p^m}}d_b=p^n$. Expanding the square gives
\begin{align*}
 \sum_b(d_b-p^{n-m})^2
 &=\sum_b d_b^2
   -2p^{n-m}\sum_b d_b
   +\sum_b p^{2n-2m}\\
 &=E_F(u)-2p^{2n-m}+p^m p^{2n-2m}\\
 &=E_F(u)-p^{2n-m}.
\end{align*}
Multiplying by $(p-1)p^{m-1}$ yields
\[
 (p-1)p^{m-1}\sum_b(d_b-p^{n-m})^2
 =\frac{p-1}{p}\left(p^mE_F(u)-p^{2n}\right),
\]
which proves~\eqref{eq:profile-ddt-distance}.

For~\eqref{eq:profile-row-spectrum}, partition the output values according to the value of the DDT entry.  By definition of $\nu_k(u)$,
\[
 \sum_{b\in\F_{p^m}}(d_b-p^{n-m})^2
 =\sum_{k\ge0}\nu_k(u)(k-p^{n-m})^2.
\]
Inserting this partition into~\eqref{eq:profile-ddt-distance} gives
\[
 \begin{aligned}
 S(u)
 &=(p-1)p^{m-1}
   \sum_{b\in\F_{p^m}}(d_b-p^{n-m})^2\\
 &=(p-1)p^{m-1}
   \sum_{k\ge0}\nu_k(u)(k-p^{n-m})^2,
 \end{aligned}
\]
which is~\eqref{eq:profile-row-spectrum}.

Finally, for fixed nonzero $u$, the left-hand side of~\eqref{eq:profile-ddt-energy} contains exactly
$p(p^m-1)$ terms: there are $p^m-1$ nonzero masks $v$ and $p$ trace levels $j$.  Each term satisfies
$|L_F(u,v;j)|^2\le(\Gamma_F^{(p)})^2$.  Therefore $S(u)\le p(p^m-1)(\Gamma_F^{(p)})^2$. On the other hand,~\eqref{eq:profile-ddt-energy} gives $S(u)=\frac{p-1}{p}\bigl(p^mE_F(u)-p^{2n}\bigr)$. Together with the preceding upper bound for $S(u)$, this yields
\[
 \frac{p-1}{p}\bigl(p^mE_F(u)-p^{2n}\bigr)
 \le p(p^m-1)(\Gamma_F^{(p)})^2.
\]
Dividing by $p(p^m-1)$ and rearranging gives
\[
 (\Gamma_F^{(p)})^2
 \ge
 \frac{p-1}{p^2(p^m-1)}
 \bigl(p^mE_F(u)-p^{2n}\bigr),
\]
which is~\eqref{eq:gamma-energy-lower}.  Since this holds for each nonzero $u$, maximizing the right-hand side over $u\ne0$ gives the final assertion.
\end{proof}

The identity gives the converse to Proposition~\ref{prop:balanced-derivative} and shows that zero profile is exactly the balanced-derivative case, not merely a consequence of it.

\begin{corollary}
\label{cor:balanced-iff-zero-profile}
Fix $u\in\F_{p^n}$.  The following are equivalent:
\begin{enumerate}[label=\textup{(\roman*)}]
\item $D_uF$ is balanced;
\item $L_F(u,v;j)=0$ for every $v\in\F_{p^m}^*$ and every $j\in\Fp$;
\item $\AC_F(u,w)=0$ for every $w\in\F_{p^m}^*$.
\end{enumerate}
\end{corollary}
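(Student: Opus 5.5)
I will prove the cycle (i)$\Rightarrow$(ii)$\Rightarrow$(iii)$\Rightarrow$(i), using only results already established. The implication (i)$\Rightarrow$(ii) is exactly Proposition~\ref{prop:balanced-derivative} when $u\ne0$. For $u=0$, the derivative $D_0F$ is identically zero. It is balanced only when $m=0$, which is excluded since $m\ge1$, so (i) fails. In that case (ii) also fails, because $L_F(0,v;0)=p^n-p^{n-1}\neq0$, and (iii) fails, because $\AC_F(0,w)=p^n$. All three statements are therefore false simultaneously, and the equivalence holds trivially for $u=0$. I will record this case separately so that the main argument can assume $u\neq0$. Actually the energy argument below covers $u=0$ as well, so the separation is only for clarity.

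For (ii)$\Rightarrow$(iii), I take $w\in\F_{p^m}^*$ and apply~\eqref{eq:ac-from-profile} with $v=w$ and $t=1$. This gives $\AC_F(u,w)=\sum_j L_F(u,w;j)\zeta_p^j$, which vanishes term by term under (ii).

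For (iii)$\Rightarrow$(i), the cleanest route is through the energy identity. Combining~\eqref{eq:energy-orbit-count} with~\eqref{eq:profile-ddt-distance}, as in the proof of Theorem~\ref{thm:profile-ddt-energy}, yields
\[
 \frac{p-1}{p}\sum_{w\in\F_{p^m}^*}|\AC_F(u,w)|^2
 =(p-1)p^{m-1}\sum_{b\in\F_{p^m}}\bigl(\DDT_F(u,b)-p^{n-m}\bigr)^2 .
\]
Under (iii) the left-hand side is zero. The right-hand side is a sum of squares of real numbers with a positive prefactor, so every term vanishes. Hence $\DDT_F(u,b)=p^{n-m}$ for all $b$, which is balancedness.

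As an alternative, I could plug (iii) into the inversion formula~\eqref{eq:ddt-fourier-inversion}; the $v=0$ term alone gives $\DDT_F(u,b)=p^{n-m}$. Either way, (ii)$\Rightarrow$(i) also follows directly from~\eqref{eq:profile-ddt-distance}, since its left side vanishes under (ii).

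I do not expect a real obstacle here. The only point needing care is the degenerate case $u=0$, where (i) is false: the energy identity still shows that (ii) and (iii) fail there. A minor point is that $p^{n-m}$ may be non-integral when $m>n$, in which case balancedness is impossible; but the sum-of-squares argument still correctly shows that (ii) and (iii) then fail too, since the right-hand side cannot vanish.
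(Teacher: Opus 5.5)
Your proof is correct and follows the paper's cycle (i)$\Rightarrow$(ii)$\Rightarrow$(iii)$\Rightarrow$(i), with the same first two steps. The one difference is in (iii)$\Rightarrow$(i): your main argument uses the second-moment identity of Theorem~\ref{thm:profile-ddt-energy}, which in fact yields all three equivalences at once, whereas the paper uses the Fourier inversion formula~\eqref{eq:ddt-fourier-inversion}, which is exactly your stated alternative. Your separate treatment of $u=0$, where all three conditions fail, is a harmless extra precaution, since the paper's appeal to Proposition~\ref{prop:balanced-derivative} (stated for $u\ne0$) is vacuous there.
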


\begin{proof}
Assume first that $D_uF$ is balanced.  Proposition~\ref{prop:balanced-derivative} then gives
$L_F(u,v;j)=0$ for every nonzero $v \in \F_{p^m}$ and every $j \in \F_p$, proving (i)$\Rightarrow$(ii).

Assume (ii).  For any $w\in\F_{p^m}^*$, take $v=w$ and $t=1$ in~\eqref{eq:ac-from-profile}.  Then $\AC_F(u,w)=0$, so (ii)$\Rightarrow$(iii).

Finally, we assume (iii).  The inversion formula~\eqref{eq:ddt-fourier-inversion}, proved in Theorem~\ref{thm:profile-ddt-inversion}, gives for every $b\in\F_{p^m}$
\[
 \DDT_F(u,b)
 =\frac1{p^m}\sum_{w\in\F_{p^m}}
 \AC_F(u,w)\zeta_p^{-\Tr_1^m(wb)}.
\]
All nonzero terms vanish by (iii).  The remaining term is
\[
 \DDT_F(u,b)=\frac{\AC_F(u,0)}{p^m}=\frac{p^n}{p^m}=p^{n-m},
\]
independently of $b$.  Thus every value of $\F_{p^m}$ has exactly $p^{n-m}$ preimages under $D_uF$, so $D_uF$ is balanced.  This proves (iii)$\Rightarrow$(i).
\end{proof}

\begin{corollary}
\label{cor:planar-characterization}
Let $p$ be odd and $F:\F_{p^n}\to\F_{p^n}$.  Then $F$ is planar if and only if
\[
 \mathcal L_F(u,v)=(0,\ldots,0)
 \qquad\text{for every }u,v\in\F_{p^n}^*.
\]
\end{corollary}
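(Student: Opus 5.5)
The plan is to reduce the statement to \Cref{cor:balanced-iff-zero-profile}, applied in every nonzero direction $u$, with $m=n$. First I unwind the definition of planarity. For $p$ odd and $m=n$, $F$ is planar exactly when $D_uF$ is a permutation of $\F_{p^n}$ for every $u\in\F_{p^n}^*$. Since the domain and codomain of $D_uF$ both have $p^n$ elements, $D_uF$ is a permutation if and only if every $b\in\F_{p^n}$ has exactly one preimage. This is precisely the balanced condition, because $p^{n-m}=p^0=1$. So planarity is equivalent to the statement that $D_uF$ is balanced for all $u\neq0$.

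Next, fix $u\in\F_{p^n}^*$ and apply \Cref{cor:balanced-iff-zero-profile} with $m=n$. The equivalence (i)$\Leftrightarrow$(ii) says that $D_uF$ is balanced if and only if $L_F(u,v;j)=0$ for all $v\in\F_{p^n}^*$ and all $j\in\Fp$. The latter condition says exactly that $\mathcal L_F(u,v)=(0,\ldots,0)$ for every nonzero $v$.

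Finally, take the conjunction over all $u\in\F_{p^n}^*$ to obtain the claimed characterization. Both directions come for free from the corollary, so no inequality needs to be checked. The only point needing care is the identification of ``permutation'' with ``balanced'' when $m=n$; that identification is immediate from counting, so there is no real obstacle.

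I would add one remark: the hypothesis that $p$ is odd enters only through the definition of planarity. In characteristic $2$ no function can be planar, since $x$ and $x+u$ always have the same derivative value, and the argument itself is characteristic-free.
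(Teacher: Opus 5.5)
Your proof is correct and follows essentially the same route as the paper. Both identify planarity with balancedness of every $D_uF$, $u\neq0$, via the fiber size $p^{n-n}=1$, and then apply the equivalence (i)$\Leftrightarrow$(ii) of Corollary~\ref{cor:balanced-iff-zero-profile} in each nonzero direction; the paper cites Proposition~\ref{prop:balanced-derivative} for the forward implication, but that is the same (i)$\Rightarrow$(ii) step.
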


\begin{proof}
Suppose first that $F$ is planar.  By definition, for every $u\in\F_{p^n}^*$ the derivative
$D_uF:x\mapsto F(x+u)-F(x)$ is a permutation of $\F_{p^n}$.  Hence every output value has exactly one preimage, so $D_uF$ is balanced.  Proposition~\ref{prop:balanced-derivative} then gives $L_F(u,v;j)=0 \qquad(v\in\F_{p^n}^*,\ j\in\Fp)$, and therefore $\mathcal L_F(u,v)=(0,\ldots,0)$ for every nonzero $u,v$.

Conversely, assume that all these nontrivial profiles vanish.  Fix $u\ne0$.  Corollary~\ref{cor:balanced-iff-zero-profile} shows that $D_uF$ is balanced.  Here both domain and codomain have cardinality $p^n$, so the common fiber size of a balanced map is
$p^{n-n}=1$.  Thus $D_uF$ is bijective.  Since this holds for every $u\ne0$, $F$ is planar.
\end{proof}

We next consider what the preceding identities imply, and do not imply, for functions of low differential uniformity.

The DDT relation also clarifies what can, and cannot, be inferred from differential uniformity alone.  If $F$ is $\delta$-differentially uniform, then every summand in~\eqref{eq:profile-ddt} is at most $\delta$, and each affine trace hyperplane contains $p^{m-1}$ elements.  Hence
\begin{equation*}
\label{eq:rough-du-bound}
 0\le N_F(u,v;j)\le \delta p^{m-1}
\end{equation*}
for $u\ne0$ and $v\ne0$.  Consequently,
\begin{equation}
\label{eq:rough-profile-bound}
 -p^{n-1}\le L_F(u,v;j)\le \delta p^{m-1}-p^{n-1}.
\end{equation}
These bounds are deliberately crude: differential uniformity controls only the largest individual DDT entry, whereas the differential-linear profile depends on the whole derivative distribution.  Theorem~\ref{thm:profile-ddt-energy} gives a precise answer at the second-moment level.  The total profile energy is exactly determined by the distance of the full DDT row from the balanced row, or equivalently by the row differential spectrum~\eqref{eq:row-diff-spectrum}.  Thus replacing $\delta_F$ by the full differential spectrum does not merely sharpen~\eqref{eq:rough-profile-bound}; it determines the total squared profile values exactly.  What is not determined by the row spectrum alone is how this energy is distributed among the individual trace hyperplanes and hence the precise value of $\Gamma_F^{(p)}$.

For APN square maps in odd characteristic the row-spectrum identity has a particularly simple form.

\begin{corollary}
\label{cor:apn-profile-energy}
Let $p$ be odd and let $F:\F_{p^n}\to\F_{p^n}$ be APN.  For $u\ne0$, set $r_u=\#\{b\in\F_{p^n}:\DDT_F(u,b)=2\}$. Then the $u$-th DDT row contains exactly $r_u$ zero entries, exactly $r_u$ entries equal to $2$, and all remaining entries are equal to $1$.  Moreover,
\begin{equation}
\label{eq:apn-profile-energy}
 \sum_{v\in\F_{p^n}^*}\sum_{j\in\Fp}|L_F(u,v;j)|^2
 =2(p-1)p^{n-1}r_u.
\end{equation}
Consequently,
\begin{equation}
\label{eq:apn-gamma-lower}
 (\Gamma_F^{(p)})^2
 \ge
 \frac{2(p-1)p^{n-2}r_u}{p^n-1}.
\end{equation}
In particular, the total profile energy in direction $u$ vanishes if and only if $r_u=0$, equivalently if and only if $D_uF$ is a permutation.
\end{corollary}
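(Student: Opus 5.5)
The plan is to first pin down the shape of a DDT row for an APN map in odd characteristic, then feed that shape into the row-spectrum identity~\eqref{eq:profile-row-spectrum} of Theorem~\ref{thm:profile-ddt-energy} with $m=n$.

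\emph{Row structure.} Fix $u\ne0$. In odd characteristic, APN means $\DDT_F(u,b)\in\{0,1,2\}$ for every $b$. I will not need any parity argument of the binary type, where solutions come in pairs $x,x+u$; that fails here, since $x\mapsto x+u$ does not preserve the solution set. Counting alone suffices. Write $\nu_0,\nu_1,\nu_2$ for the row spectrum, with $\nu_2=r_u$. Then
\[
\nu_0+\nu_1+\nu_2=p^n, \qquad \nu_1+2\nu_2=p^n ,
\]
because there are $p^n$ outputs $b$ and $p^n$ inputs $x$. Subtracting the two equations gives $\nu_0=\nu_2=r_u$ and $\nu_1=p^n-2r_u$. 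This proves the first assertion.

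\emph{Energy.} Insert this spectrum into~\eqref{eq:profile-row-spectrum} with $m=n$, so that $p^{n-m}=1$:
\[
(p-1)p^{n-1}\bigl(r_u(0-1)^2+(p^n-2r_u)\cdot 0+r_u(2-1)^2\bigr)=2(p-1)p^{n-1}r_u ,
\]
which is~\eqref{eq:apn-profile-energy}.

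\emph{Lower bound.} Apply~\eqref{eq:gamma-energy-lower}, or equivalently bound the left side of~\eqref{eq:apn-profile-energy} by $p(p^n-1)(\Gamma_F^{(p)})^2$, since it has $p(p^n-1)$ terms each at most $(\Gamma_F^{(p)})^2$. Dividing through gives~\eqref{eq:apn-gamma-lower}.

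\emph{Final equivalence.} The energy is zero exactly when $r_u=0$. By the row structure, $r_u=0$ means every entry of the row equals $1$, that is, $D_uF$ is bijective. This also agrees with Corollary~\ref{cor:balanced-iff-zero-profile}.

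The only step needing care is the row-structure count, specifically making sure the identity $\nu_1+2\nu_2=p^n$ is used rather than a pairing argument. Everything after that is direct substitution into Theorem~\ref{thm:profile-ddt-energy}.
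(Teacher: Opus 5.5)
Your proposal is correct and follows the paper's proof essentially step for step. The same two counting equations give $\nu_0=\nu_2=r_u$, and the second-moment identity of Theorem~\ref{thm:profile-ddt-energy} then gives the energy; you use its spectrum form \eqref{eq:profile-row-spectrum}, while the paper uses the equivalent centered form \eqref{eq:profile-ddt-distance}. The same count of $p(p^n-1)$ terms yields the bound on $\Gamma_F^{(p)}$, and $r_u=0$ again characterizes the permutation case.
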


\begin{proof}
Fix $u\ne0$.  Because $F$ is APN, its differential uniformity is $2$, so every entry in the $u$-th DDT row belongs to $\{0,1,2\}$.  Let
\[
 a_u=\#\{b:\DDT_F(u,b)=0\},\qquad
 b_u=\#\{b:\DDT_F(u,b)=1\},
\]
and retain the notation $r_u=\#\{b:\DDT_F(u,b)=2\}$. There are $p^n$ output values, hence
\begin{equation}
\label{eq:apn-count-1}
 a_u+b_u+r_u=p^n.
\end{equation}
On the other hand, the entries in any DDT row count all $p^n$ inputs, so
\begin{equation}
\label{eq:apn-count-2}
 0\cdot a_u+1\cdot b_u+2\cdot r_u=p^n.
\end{equation}
Subtracting~\eqref{eq:apn-count-2} from~\eqref{eq:apn-count-1} gives
$a_u=r_u$.  Thus the row contains equally many zeros and twos.

Because $m=n$, the balanced row has constant value $p^{n-m}=1$.  Applying~\eqref{eq:profile-ddt-distance} gives
\begin{align*}
 \sum_{v\in\F_{p^n}^*}\sum_{j\in\Fp}|L_F(u,v;j)|^2
 &=(p-1)p^{n-1}
   \sum_{b\in\F_{p^n}}(\DDT_F(u,b)-1)^2.
\end{align*}
For a DDT entry equal to $0$ or $2$, the squared deviation from $1$ is $1$; for an entry equal to $1$, it is $0$.  Since there are $a_u+r_u=2r_u$ entries of the first two types, $\sum_{b\in\F_{p^n}}(\DDT_F(u,b)-1)^2=2r_u$. Therefore
\[
 \begin{aligned}
 \sum_{v\in\F_{p^n}^*}\sum_{j\in\Fp}|L_F(u,v;j)|^2
 &=(p-1)p^{n-1}
   \sum_{b\in\F_{p^n}}(\DDT_F(u,b)-1)^2\\
 &=(p-1)p^{n-1}(2r_u)
 =2(p-1)p^{n-1}r_u,
 \end{aligned}
\]
which is~\eqref{eq:apn-profile-energy}.

There are $p(p^n-1)$ quantities $|L_F(u,v;j)|^2$ in this sum, each bounded above by $(\Gamma_F^{(p)})^2$.  Hence $2(p-1)p^{n-1}r_u \le p(p^n-1)(\Gamma_F^{(p)})^2$, and division by $p(p^n-1)$ yields~\eqref{eq:apn-gamma-lower}.

Finally, the profile energy vanishes if and only if $r_u=0$.  In that case $a_u=0$ as well, so every DDT entry in the row is $1$.  This means precisely that every output value has one preimage under $D_uF$, i.e., $D_uF$ is a permutation.  Conversely, if $D_uF$ is a permutation, then every $b\in\F_{p^n}$ has exactly one preimage.  Thus $\DDT_F(u,b)=1$ for all $b$, so there are no entries equal to $2$, i.e., $r_u=0$.  Formula~\eqref{eq:apn-profile-energy} then gives that total profile energy vanishes.
\end{proof}

This specialization makes the distinction between differential uniformity and differential-linear behavior explicit.  Within the APN class, $\delta_F$ is fixed, while the profile energy of a derivative direction varies with the number of double values in that derivative.  Determining the individual profile entries requires the finer placement of those multiplicities relative to the trace hyperplanes.  This is analogous to the phenomenon observed for second-order zero differential and boomerang-type criteria, where low differential uniformity does not by itself force optimal behavior for the refined statistic~\cite{GargEtAl25SOZD,GargEtAl25JAA}.  In particular, the odd-characteristic results in~\cite{GargEtAl25JAA} show that even within low-differential-uniformity classes the second-order zero differential behavior can vary substantially.

The preceding results reduce the determination of individual profiles to a finer question than differential uniformity alone: one must determine how the DDT multiplicities are distributed among the affine trace hyperplanes.  We now illustrate this point with two odd-characteristic power functions whose derivatives can be analyzed directly.  For the monomial $x^{p^k+1}$ the complete profile can be determined from a linearized derivative.  For the inverse monomial, the profile admits an exact reduction to classical Kloosterman sums.


\begin{theorem}
\label{thm:gold-profile}
Let $p$ be odd, let $k\ge1$, let $F(x)=x^{p^k+1}$ on $\F_{p^n}$, and let $e=\gcd(k,n), r=\frac ne$. For $u\in\F_{p^n}^*$ define the $\Fp$-linear map $\ell_u(x)=ux^{p^k}+u^{p^k}x$ and
\[
 \bigl(\operatorname{Im}\ell_u\bigr)^\perp
 =
 \left\{z\in\F_{p^n}:
 \Tr_1^n(zw)=0\text{ for every }w\in\operatorname{Im}\ell_u
 \right\}.
\]
Then, for every $(u,v)\in(\F_{p^n}^*)^2$, the following hold.
\begin{enumerate}[label=\textup{(\roman*)}]
\item If $r$ is odd, then $\mathcal L_F(u,v)=(0,\ldots,0)$. \item If $r$ is even, then, for every $j\in\Fp$,
\begin{equation}
\label{eq:gold-profile}
 L_F(u,v;j)=
 \begin{cases}
 (p-1)p^{n-1},
 &\text{if }v\in(\operatorname{Im}\ell_u)^\perp
   \text{ and }j=\Tr_1^n(vu^{p^k+1}),\\
 -p^{n-1},
 &\text{if }v\in(\operatorname{Im}\ell_u)^\perp
   \text{ and }j\ne\Tr_1^n(vu^{p^k+1}),\\
 0,
 &\text{if }v\notin(\operatorname{Im}\ell_u)^\perp.
 \end{cases}
\end{equation}
Moreover, in the even case, among the $(p^n-1)^2$ pairs
$(u,v)\in(\F_{p^n}^*)^2$, the profile is the zero vector for $(p^n-1)(p^n-p^e)$ pairs, and for each of the remaining $(p^n-1)(p^e-1)$ pairs it is a permutation of
\[
 \bigl((p-1)p^{n-1},
       \underbrace{-p^{n-1},\ldots,-p^{n-1}}_{p-1}\bigr).
\]
\end{enumerate}
\end{theorem}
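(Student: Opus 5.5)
The derivative is affine: $D_uF(x)=(x+u)^{p^k+1}-x^{p^k+1}=ux^{p^k}+u^{p^k}x+u^{p^k+1}=\ell_u(x)+u^{p^k+1}$, with $\ell_u$ an $\Fp$-linear map of $\F_{p^n}$. Hence $x\mapsto\Tr_1^n(vD_uF(x))=\Tr_1^n(v\ell_u(x))+\Tr_1^n(vu^{p^k+1})$ is an affine $\Fp$-valued functional. Its linear part is $x\mapsto\Tr_1^n(\ell_u^*(v)x)$, where $\ell_u^*$ is the adjoint of $\ell_u$ with respect to the trace form.

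Either the linear part is nonzero, or it vanishes identically.

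\begin{itemize}
\item If it is nonzero, it is surjective with all fibres of size $p^{n-1}$. Then every $N_F(u,v;j)=p^{n-1}$ and the profile is zero.
\item It vanishes identically exactly when $\Tr_1^n(v\ell_u(x))=0$ for all $x$, i.e.\ when $v\in(\operatorname{Im}\ell_u)^\perp$. Then the trace component is the constant $c=\Tr_1^n(vu^{p^k+1})$. So $N_F(u,v;c)=p^n$ and $N_F(u,v;j)=0$ otherwise, which after centering gives $(p-1)p^{n-1}$ and $-p^{n-1}$.
\end{itemize}

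This already yields the trichotomy in \eqref{eq:gold-profile} for every $r$. What remains is to compute $\dim\ker\ell_u$, since $(\operatorname{Im}\ell_u)^\perp$ has $\Fp$-dimension $\dim\ker\ell_u$ by nondegeneracy of the trace pairing.

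Next I solve $ux^{p^k}+u^{p^k}x=0$. For $x\ne0$, divide by $u^{p^k}x$ and set $y=x/u$; the equation becomes $y^{p^k-1}=-1$. The nonzero kernel elements are therefore $u\cdot\{y:y^{p^k-1}=-1\}$. Now fix $\omega$ with $\omega^{p^k-1}=-1$, if such an $\omega$ exists in $\F_{p^n}$. Then the solution set is $\omega\,\F_{p^e}^*$, because $y^{p^k-1}=1$ in $\F_{p^n}$ iff $y\in\F_{p^{\gcd(k,n)}}^*$. So $\ker\ell_u$ is either $\{0\}$ or a one-dimensional $\F_{p^e}$-space of size $p^e$.

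The main obstacle is showing that $-1$ is a $(p^k-1)$-th power in $\F_{p^n}^*$ exactly when $r$ is even. Write $g$ for a generator of the cyclic group $\F_{p^n}^*$, so $-1=g^{(p^n-1)/2}$. A solution exists iff $\gcd(p^k-1,p^n-1)=p^e-1$ divides $(p^n-1)/2$. Equivalently, $2\mid (p^n-1)/(p^e-1)=1+p^e+\dots+p^{e(r-1)}$, a sum of $r$ odd terms, and this holds iff $r$ is even. Hence:

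\begin{itemize}
\item For $r$ odd, $\ell_u$ is bijective, so $\operatorname{Im}\ell_u=\F_{p^n}$ and $(\operatorname{Im}\ell_u)^\perp=\{0\}$. Every nonzero $v$ then falls in the balanced case, which proves (i). This is consistent with Corollary~\ref{cor:planar-characterization}, since $F$ is then planar.
\item For $r$ even, $\lvert\ker\ell_u\rvert=p^e$ for every $u\ne0$, so $\lvert(\operatorname{Im}\ell_u)^\perp\rvert=p^e$. This gives $p^e-1$ nonzero masks $v$ with an extreme profile and $p^n-p^e$ masks with zero profile.
\end{itemize}

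Summing over the $p^n-1$ choices of $u$ gives the stated counts $(p^n-1)(p^e-1)$ and $(p^n-1)(p^n-p^e)$. In the extreme case the profile is a permutation of the displayed vector, with the large entry placed at $j=\Tr_1^n(vu^{p^k+1})$.
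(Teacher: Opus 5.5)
Your proof is correct and is essentially the paper's argument. It uses the same ingredients: the affine form $D_uF=\ell_u+u^{p^k+1}$, the reduction of $\ker\ell_u$ to $y^{p^k-1}=-1$ with solvability governed by the parity of $(p^n-1)/(p^e-1)$, and $\dim(\operatorname{Im}\ell_u)^\perp=\dim\ker\ell_u$. The one difference is that you count level sets on the input side, treating $x\mapsto\Tr_1^n(vD_uF(x))$ as an affine functional; the paper instead passes through the DDT row (value $p^e$ on the coset $u^{p^k+1}+\operatorname{Im}\ell_u$) and Proposition~\ref{prop:profile-ddt}. Your route gives the trichotomy uniformly in $r$ without handling the odd case separately.
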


\begin{proof}
For $u\ne0$,
\begin{align*}
 D_uF(x)
 &=(x+u)^{p^k+1}-x^{p^k+1}
 =ux^{p^k}+u^{p^k}x+u^{p^k+1}
 =\ell_u(x)+u^{p^k+1}.
\end{align*}
Thus every nonempty fiber of $D_uF$ is a coset of $\ker\ell_u$.

We first determine the kernel.  Writing $x=uy$, the equation
$\ell_u(x)=0$ becomes $y^{p^k}+y=0$. Besides $y=0$, a solution exists precisely when $y^{p^k-1}=-1$. Since $\F_{p^n}^*$ is cyclic and $\gcd(p^k-1,p^n-1)=p^e-1$, the equation $z^{p^k-1}=-1$ is solvable in $\F_{p^n}^*$ if and only if $(-1)^{(p^n-1)/(p^e-1)}=1$. Now $\frac{p^n-1}{p^e-1} =1+p^e+\cdots+p^{e(r-1)}$. Because $p$ is odd, this integer has the same parity as $r$.  Hence, if
$r$ is odd, $\ker\ell_u=\{0\}$; if $r$ is even, the equation has
$p^e-1$ nonzero solutions, so $|\ker\ell_u|=p^e$. 
Suppose first that $r$ is odd.  Then $\ell_u$ is a permutation of
$\F_{p^n}$, and therefore so is $D_uF$.  Thus every DDT entry in the
$u$-row equals $1$.  By Proposition~\ref{prop:profile-ddt}, for every
$v\ne0$ and $j\in\Fp$, $N_F(u,v;j) =\#\{b\in\F_{p^n}:\Tr_1^n(vb)=j\} =p^{n-1}$, and consequently $\mathcal L_F(u,v)=(0,\ldots,0)$.

Suppose now that $r$ is even.  The image of $\ell_u$ has cardinality
$p^{n-e}$, and
\begin{equation*}
 \DDT_F(u,b)=
 \begin{cases}
 p^e,&b\in u^{p^k+1}+\operatorname{Im}\ell_u,\\
 0,&b\notin u^{p^k+1}+\operatorname{Im}\ell_u.
 \end{cases}
\end{equation*}
Using Proposition~\ref{prop:profile-ddt} and writing
$b=u^{p^k+1}+w$, we obtain
\begin{align}
 N_F(u,v;j)
 &=p^e
 \#\left\{w\in\operatorname{Im}\ell_u:
 \Tr_1^n(vw)
 =j-\Tr_1^n(vu^{p^k+1})\right\}.
 \label{eq:gold-level-count}
\end{align}

If $v\in(\operatorname{Im}\ell_u)^\perp$, the trace on the right-hand
side of~\eqref{eq:gold-level-count} is identically zero.  Hence
$N_F(u,v;j)=p^n$ when
$j=\Tr_1^n(vu^{p^k+1})$, and it is $0$ for every other $j$.  Centering
by $p^{n-1}$ gives the first two cases of~\eqref{eq:gold-profile}.

If $v\notin(\operatorname{Im}\ell_u)^\perp$, then $w\longmapsto\Tr_1^n(vw)$ is a nonzero $\Fp$-linear map from $\operatorname{Im}\ell_u$ to
$\Fp$.  It is therefore surjective, and every fiber has
$p^{n-e-1}$ elements.  Equation~\eqref{eq:gold-level-count} then gives
$N_F(u,v;j)=p^{n-1}$ for every $j$, hence
$L_F(u,v;j)=0$.

Finally, the trace pairing $(z,w)\longmapsto\Tr_1^n(zw)$ is nondegenerate on $\F_{p^n}$, so
\[
 \dim_{\Fp}(\operatorname{Im}\ell_u)^\perp
 =n-\dim_{\Fp}(\operatorname{Im}\ell_u)=e.
\]
Thus, for each fixed $u\ne0$, exactly $p^e-1$ nonzero masks lie in the
orthogonal complement and exactly $p^n-p^e$ nonzero masks lie outside
it.  Multiplying these two numbers by the $p^n-1$ choices of $u$
gives the asserted multiplicities.
\end{proof}

The preceding theorem gives a complete numerical profile spectrum for
this Gold-type family.  The inverse map behaves differently: its DDT
row is still explicit, but the trace-hyperplane sums naturally produce Kloosterman sums, as expected.

\begin{theorem}\label{thm:inverse-profile}
Let $p$ be odd, put $q=p^n$, and let $F(x)=x^{q-2}$ on $\F_q$, with $F(0)=0$.  Let $\chi$ be the quadratic character of
$\F_q$, extended by $\chi(0)=0$, and let $\Psi(z)=\zeta_p^{\Tr_1^n(z)}$ be the canonical additive character.  For $z\in\F_q^*$, we define the Kloosterman sum
\begin{equation*}
 \Kl(z)=\sum_{s\in\F_q^*}\Psi\left(s+\frac zs\right).
\end{equation*}
Then
\begin{equation}
\label{eq:inverse-ddt-row}
 \DDT_F(1,c)=
 \begin{cases}
 0,&c=0,\\
 3+\chi(-3),&c=1,\\
 1+\chi(c^2-4c),&c\notin\{0,1\},
 \end{cases}
\end{equation}
and  
\begin{align}
 L_F(u,v;j)
 &=-\1_{\{j=0\}}
   +2\cdot \1_{\{j=\Tr_1^n(\lambda)\}}
   -\frac1p 
   +\frac1p\sum_{\alpha\in\Fp^*}
      \zeta_p^{-\alpha j}
      \Psi(2\alpha\lambda)
      \Kl(\alpha^2\lambda^2),
 \label{eq:inverse-profile-kloosterman}
\end{align}
where, for $u,v\in\F_q^*$ and $j\in\Fp$, we write $\lambda=vu^{-1}$.
\end{theorem}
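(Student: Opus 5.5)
The plan is to reduce to the direction $u=1$ using the power-map structure, compute the row $\DDT_F(1,\cdot)$ by solving a quadratic, and then evaluate the trace-hyperplane sums of Proposition~\ref{prop:profile-ddt} with additive characters. A rational parametrization of the conic attached to $\chi(c^2-4c)$ is what turns these character sums into Kloosterman sums.

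\emph{Reduction to $u=1$.} For $u\ne0$, substitute $x=uy$. Since $F$ is multiplicative, $D_uF(uy)=u^{-1}D_1F(y)$. Hence $\Tr_1^n(vD_uF(uy))=\Tr_1^n(\lambda D_1F(y))$ with $\lambda=vu^{-1}$, and $x\mapsto uy$ is a bijection. Therefore $L_F(u,v;j)=L_F(1,\lambda;j)$, and it suffices to treat $u=1$ with mask $\lambda\in\F_q^*$.

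\emph{The DDT row.} I will solve $F(x+1)-F(x)=c$.
\begin{itemize}
\item The two exceptional inputs $x=0$ and $x=-1$ each give $c=1$.
\item For $x\notin\{0,-1\}$ the equation reads $-1/(x(x+1))=c$. This is impossible for $c=0$. For $c\ne0$ it is equivalent to $x^2+x+c^{-1}=0$, whose roots are never $0$ or $-1$.
\item The quadratic has $1+\chi(1-4/c)=1+\chi(c^2-4c)$ roots, after multiplying the discriminant by the square $c^2$.
\end{itemize}
Adding the two exceptional inputs at $c=1$, where $c^2-4c=-3$, gives $3+\chi(-3)$. This yields \eqref{eq:inverse-ddt-row}.

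\emph{Character sum and profile.} By \eqref{eq:profile-ddt} and the orthogonality relation \eqref{eq:orthogonality},
\[
N_F(1,\lambda;j)=\frac1p\sum_{\alpha\in\Fp}\zeta_p^{-\alpha j}S(\alpha\lambda),\qquad S(\beta)=\sum_{c\in\F_q}\DDT_F(1,c)\Psi(\beta c).
\]
The $\alpha=0$ term equals $p^{n-1}$ and cancels the centering.

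The key step is to evaluate $S(\beta)$ for $\beta\ne0$. I write $\DDT_F(1,c)=1+\chi(c^2-4c)-\1_{\{c=0\}}+2\cdot\1_{\{c=1\}}$, which is valid for every $c$ because $\chi(0)=0$. The function $1+\chi(c^2-4c)$ counts the points $(c,y)$ with $(c-2)^2-y^2=4$. Factoring $(c-2-y)(c-2+y)=4$ and setting $c-2-y=2s$, $c-2+y=2/s$ with $s\in\F_q^*$ gives a bijection between these points and $\F_q^*$, under which $c=2+s+s^{-1}$. I will check that the fibres over $c=0$ and $c=4$, where $c^2-4c=0$, have the right size one. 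Then
\[
\sum_{c}\bigl(1+\chi(c^2-4c)\bigr)\Psi(\beta c)=\Psi(2\beta)\sum_{s\in\F_q^*}\Psi(\beta s+\beta s^{-1})=\Psi(2\beta)\Kl(\beta^2),
\]
after the rescaling $s\mapsto s/\beta$. Consequently $S(\beta)=\Psi(2\beta)\Kl(\beta^2)-1+2\Psi(\beta)$.

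\emph{Assembling.} I substitute $\beta=\alpha\lambda$ and sum over $\alpha\in\Fp^*$, using two elementary evaluations:
\begin{itemize}
\item $\frac1p\sum_{\alpha\ne0}\zeta_p^{-\alpha j}=\1_{\{j=0\}}-\frac1p$;
\item $\frac1p\sum_{\alpha\ne0}\zeta_p^{\alpha(\Tr(\lambda)-j)}=\1_{\{j=\Tr(\lambda)\}}-\frac1p$, which uses $\Psi(\alpha\lambda)=\zeta_p^{\alpha\Tr(\lambda)}$ for $\alpha\in\Fp$.
\end{itemize}
The constant $-1$ in $S$ contributes $-\1_{\{j=0\}}+\frac1p$, and the term $2\Psi(\beta)$ contributes $2\cdot\1_{\{j=\Tr_1^n(\lambda)\}}-\frac2p$. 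Together with the Kloosterman part this gives exactly \eqref{eq:inverse-profile-kloosterman}, with net constant $-\frac1p$.

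The main obstacle is the conic parametrization step. There I must verify carefully that the multiplicity $1+\chi(c^2-4c)$ is reproduced exactly, including at the degenerate values $c=0$ and $c=4$, and that the exceptional corrections at $c=0$ and $c=1$ are accounted for once and only once.
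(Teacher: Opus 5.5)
Your proof is correct and follows essentially the paper's route: scaling to $u=1$, counting roots of the quadratic to get the DDT row, applying orthogonality over the trace hyperplane, and using the parametrization $c=2+s+s^{-1}$ followed by rescaling to reach $\Kl(\beta^2)$. The only difference is bookkeeping: you Fourier-transform the whole DDT row, which yields the closed form $\AC_F(1,\beta)=\Psi(2\beta)\Kl(\beta^2)-1+2\Psi(\beta)$ and then turns the correction terms back into indicators, whereas the paper counts the constant and correction terms directly, transforms only the $\chi$-part, and therefore needs the separate evaluation $\sum_c\chi(c^2-4c)=-1$.
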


\begin{proof}
For $u\ne0$, the substitution $x=uy$ gives $D_uF(uy)=u^{-1}D_1F(y)$, and therefore
\begin{equation}
\label{eq:inverse-ddt-scaling}
 \DDT_F(u,b)=\DDT_F(1,bu).
\end{equation}
For $x=0$ and $x=-1$ one has $D_1F(x)=1$.  If
$x\notin\{0,-1\}$, then $D_1F(x)=\frac1{x+1}-\frac1x =-\frac1{x(x+1)}$. Thus $D_1F(x)=c$ is equivalent to $cx^2+cx+1=0$. For $c\ne0$ this quadratic has
$1+\chi(c^2-4c)$ roots in $\F_q$.  When $c=1$, the two exceptional
inputs $0$ and $-1$ must be added, giving $3+\chi(-3)$.  When $c=0$
there is no solution.  Putting all of this together, it proves the claim~\eqref{eq:inverse-ddt-row}.

Fix now $u,v\ne0$, set $\lambda=vu^{-1}$, and use
$c=bu$ in Proposition~\ref{prop:profile-ddt}.  Combining Equations~\eqref{eq:inverse-ddt-scaling} and \eqref{eq:inverse-ddt-row} we infer
\begin{align*}
 N_F(u,v;j)
 &=
 \sum_{\substack{c\in\F_q^*\\
                  \Tr_1^n(\lambda c)=j}}
 \bigl(1+\chi(c^2-4c)\bigr)
 +2\cdot \1_{\{j=\Tr_1^n(\lambda)\}}.
 \label{eq:inverse-level-count}
\end{align*}
Because $\lambda\ne0$, the map
$c\mapsto\Tr_1^n(\lambda c)$ is a nonzero $\Fp$-linear functional on
$\F_q$.  Hence
\[
 \#\{c\in\F_q^*:\Tr_1^n(\lambda c)=j\}
 =p^{n-1}-\1_{\{j=0\}}.
\]
Consequently,
\begin{equation}
\label{eq:inverse-profile-mixed}
 L_F(u,v;j)
 =-\1_{\{j=0\}}
  +2\cdot \1_{\{j=\Tr_1^n(\lambda)\}}
  +S_j(\lambda),
\end{equation}
where
\[
 S_j(\lambda)
 =
 \sum_{\substack{c\in\F_q\\
                  \Tr_1^n(\lambda c)=j}}
 \chi(c^2-4c).
\]
(The term $c=0$ may be included because $\chi(0)=0$.)

By additive-character orthogonality,
\begin{align}
 S_j(\lambda)
 &=
 \frac1p\sum_{\alpha\in\Fp}
 \zeta_p^{-\alpha j}
 \sum_{c\in\F_q}
 \chi(c^2-4c)\Psi(\alpha\lambda c).
 \label{eq:inverse-orthogonality}
\end{align}
It remains to evaluate the inner mixed sum.  The affine conic $y^2=c^2-4c$ is parametrized bijectively by $t\in\F_q^*$ through
\begin{equation*}
 c=2+t+t^{-1},
 \qquad
 y=t-t^{-1}.
\end{equation*}
Indeed, $(c-2+y)(c-2-y)=4$, and every point of the conic has the unique parameter
$t=(c-2+y)/2\ne0$.  It follows in particular that the conic has
$q-1$ affine points.  Since the number of $y$ above a fixed $c$ is
$1+\chi(c^2-4c)$, we obtain $\sum_{c\in\F_q}\chi(c^2-4c)=(q-1)-q=-1$. This evaluates the term $\alpha=0$ in Equation~\eqref{eq:inverse-orthogonality}.

For $a\in\F_q^*$, the same point count, now weighted by $\Psi(ac)$,
gives
\begin{align*}
 \sum_{c\in\F_q}\chi(c^2-4c)\Psi(ac)
 &=
 \sum_{\substack{(c,y)\in\F_q^2\\y^2=c^2-4c}}
 \Psi(ac)
 -\sum_{c\in\F_q}\Psi(ac)\\
 &=
 \sum_{t\in\F_q^*}
 \Psi\bigl(a(2+t+t^{-1})\bigr)\\
 &=
 \Psi(2a)
 \sum_{t\in\F_q^*}\Psi(at+a/t).
\end{align*}
The second term in the first line vanishes because $a\ne0$.  With
$s=at$ in the last sum,
\[
 \sum_{t\in\F_q^*}\Psi(at+a/t)
 =
 \sum_{s\in\F_q^*}\Psi\left(s+\frac{a^2}{s}\right)
 =\Kl(a^2).
\]
Therefore
\begin{equation*}
 \sum_{c\in\F_q}\chi(c^2-4c)\Psi(ac)
 =\Psi(2a)\Kl(a^2)
 \qquad(a\ne0).
\end{equation*}
Substituting $a=\alpha\lambda$ into
\eqref{eq:inverse-orthogonality} yields
\[
 S_j(\lambda)
 =
 -\frac1p
 +\frac1p\sum_{\alpha\in\Fp^*}
 \zeta_p^{-\alpha j}
 \Psi(2\alpha\lambda)
 \Kl(\alpha^2\lambda^2).
\]
Combining this with~\eqref{eq:inverse-profile-mixed} proves
Equation~\eqref{eq:inverse-profile-kloosterman}.
\end{proof}

\begin{remark}\label{rem:inverse-spectrum}
Theorem~\ref{thm:inverse-profile} is an exact profile formula, however, unlike Theorem~\ref{thm:gold-profile},  it does not by itself enumerate the numerical multiset of all
profile values.  Such an enumeration amounts to controlling the
Kloosterman sums in~\eqref{eq:inverse-profile-kloosterman} as
$\lambda$ ranges over $\F_q^*$.
\end{remark}

\section{Equivalence behavior}\label{sec:equivalence}

The full profile has a particularly natural behavior under affine and extended-affine equivalence.  This is one reason to retain all trace levels for $p>2$: an added affine output term generally translates the level $j$, rather than preserving the distinguished level $j=0$.

We use the standard vector-space formulation.  Two functions $F,G:\F_{p^n}\to\F_{p^m}$ are EA-equivalent if there exist affine permutations $A$ of $\F_{p^n}$ and $B$ of $\F_{p^m}$ and an affine map $C:\F_{p^n}\to\F_{p^m}$ such that $G=B\circ F\circ A+C$; they are affine equivalent when one may take $C=0$.  Write
\[
 A(x)=A_0(x)+a,
 \qquad
 B(y)=B_0(y)+b,
 \qquad
 C(x)=C_0(x)+c,
\]
where $A_0:\F_{p^n}\to\F_{p^n}$ and $B_0:\F_{p^m}\to\F_{p^m}$ are invertible $\Fp$-linear maps and $C_0:\F_{p^n}\to\F_{p^m}$ is $\Fp$-linear.  For the nondegenerate trace pairing on $\F_{p^m}$, let $B_0^*$ denote the adjoint of $B_0$, characterized by
\begin{equation}
\label{eq:trace-adjoint}
 \Tr_1^m\bigl(vB_0(y)\bigr)
 =\Tr_1^m\bigl(B_0^*(v)y\bigr)
 \qquad(v,y\in\F_{p^m}).
\end{equation}
Since the trace pairing is nondegenerate, $B_0^*$ is again an invertible $\Fp$-linear map.

\begin{theorem}
\label{thm:ea-profile}
Let $F,G:\F_{p^n}\to\F_{p^m}$ be EA-equivalent, with
$ G=B\circ F\circ A+C$.
Then, for every $u\in\F_{p^n}$, $v\in\F_{p^m}$, and $j\in\Fp$,
\begin{equation}
\label{eq:ea-profile}
 L_G(u,v;j)
 =L_F\!\left(A_0(u),B_0^*(v);
 j-\Tr_1^m(vC_0(u))\right).
\end{equation}
Equivalently, the autocorrelations satisfy
\begin{equation}
\label{eq:ea-ac}
 \AC_G(u,v)
 =\zeta_p^{\Tr_1^m(vC_0(u))}
 \AC_F\bigl(A_0(u),B_0^*(v)\bigr).
\end{equation}
In particular, if $F$ and $G$ are affine equivalent, so that $C=0$, then
\begin{equation}
\label{eq:affine-profile}
 L_G(u,v;j)
 =L_F\bigl(A_0(u),B_0^*(v);j\bigr).
\end{equation}
\end{theorem}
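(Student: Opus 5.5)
The plan is to compute the derivative of $G$ directly and then read off the level sets.

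\textbf{Step 1: the derivative of $G$.} For $u\in\F_{p^n}$ we have $A(x+u)=A(x)+A_0(u)$. Since $B$ is affine,
\[
 D_uG(x)=B_0\bigl(F(A(x)+A_0(u))-F(A(x))\bigr)+C_0(u)
 =B_0\bigl(D_{A_0(u)}F(A(x))\bigr)+C_0(u).
\]
The constants $b$ and $c$ cancel here, and the linear part of $C$ contributes the constant $C_0(u)$.

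\textbf{Step 2: the trace component.} Applying $\Tr_1^m(v\,\cdot)$ and the adjoint relation~\eqref{eq:trace-adjoint} gives
\[
 \Tr_1^m\bigl(vD_uG(x)\bigr)=\Tr_1^m\bigl(B_0^*(v)\,D_{A_0(u)}F(A(x))\bigr)+\Tr_1^m\bigl(vC_0(u)\bigr).
\]

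\textbf{Step 3: level counts.} The condition $\Tr_1^m(vD_uG(x))=j$ is therefore equivalent to
\[
 \Tr_1^m\bigl(B_0^*(v)D_{A_0(u)}F(y)\bigr)=j-\Tr_1^m\bigl(vC_0(u)\bigr),\qquad y=A(x).
\]
Because $A$ is a bijection of $\F_{p^n}$, the substitution $y=A(x)$ shows
\[
 N_G(u,v;j)=N_F\bigl(A_0(u),B_0^*(v);j-\Tr_1^m(vC_0(u))\bigr).
\]
Subtracting $p^{n-1}$ from both sides yields~\eqref{eq:ea-profile}.

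\textbf{Step 4: the autocorrelation form.} For~\eqref{eq:ea-ac}, sum $\zeta_p^{\Tr_1^m(vD_uG(x))}$ over $x$, using the identity from Step 2. The constant factor $\zeta_p^{\Tr_1^m(vC_0(u))}$ comes out of the sum, and reindexing by $y=A(x)$ gives $\AC_F(A_0(u),B_0^*(v))$. Alternatively, insert~\eqref{eq:ea-profile} into~\eqref{eq:ac-from-profile} with $t=1$ and shift the summation index $j$.

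\textbf{Step 5: the affine case.} When $C=0$ we have $C_0=0$, so the level shift vanishes and~\eqref{eq:ea-profile} reduces to~\eqref{eq:affine-profile}.

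\textbf{Expected difficulty.} There is no real obstacle. The only points needing care are bookkeeping: that the translation constants $a$, $b$, $c$ all drop out of the derivative, that the shift in $j$ carries the sign $j-\Tr_1^m(vC_0(u))$ rather than a plus sign, and that the mask is transformed by the adjoint $B_0^*$ and not by $B_0$ or its inverse.
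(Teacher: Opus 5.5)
Your proposal is correct and follows the paper's proof essentially step for step: the same derivative computation, the same use of the trace adjoint $B_0^*$, the level count via the substitution $y=A(x)$, and the autocorrelation identity by pulling out $\zeta_p^{\Tr_1^m(vC_0(u))}$ and reindexing. Your alternative derivation of~\eqref{eq:ea-ac} via~\eqref{eq:ac-from-profile} with $t=1$ and a shift of $j$ is also valid, since that formula holds for every $v$, including $v=0$.
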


\begin{proof}
Write
\[
 A(x)=A_0(x)+a,\qquad B(y)=B_0(y)+b,\qquad C(x)=C_0(x)+c.
\]
Since $A_0$ and $B_0$ are $\Fp$-linear, for every $x,u\in\F_{p^n}$ we have $A(x+u)=A_0(x+u)+a=A(x)+A_0(u)$. Also, $C(x+u)-C(x)=C_0(u)$. Using $G=B\circ F\circ A+C$, we therefore obtain
\begin{align*}
 D_uG(x)
 &=G(x+u)-G(x)\\
 &=B\bigl(F(A(x+u))\bigr)-B\bigl(F(A(x))\bigr)
   +C(x+u)-C(x)\\
 &=B_0\bigl(F(A(x+u))-F(A(x))\bigr)+C_0(u)\\
 &=B_0\bigl(F(A(x)+A_0(u))-F(A(x))\bigr)+C_0(u)\\
 &=B_0\bigl(D_{A_0(u)}F(A(x))\bigr)+C_0(u).
\end{align*}
The constants $b$ and $c$ disappear in the second line because they cancel under subtraction.

Fix $v\in\F_{p^m}$.  By the defining property~\eqref{eq:trace-adjoint} of the trace adjoint,
\begin{align*}
 \Tr_1^m\bigl(vD_uG(x)\bigr)
 &=\Tr_1^m\bigl(vB_0(D_{A_0(u)}F(A(x)))\bigr)
   +\Tr_1^m(vC_0(u))\\
 &=\Tr_1^m\bigl(B_0^*(v)D_{A_0(u)}F(A(x))\bigr)
   +\Tr_1^m(vC_0(u)).
\end{align*}
Set $c_{u,v}=\Tr_1^m(vC_0(u))\in\Fp$. Then, for any $j\in\Fp$,
\begin{align*}
 \Tr_1^m(vD_uG(x))=j
 &\Longleftrightarrow
 \Tr_1^m\bigl(B_0^*(v)D_{A_0(u)}F(A(x))\bigr)=j-c_{u,v}.
\end{align*}
Because $A$ is an affine permutation, the change of variable $x'=A(x)$ is a bijection of $\F_{p^n}$.  Counting solutions on both sides gives $N_G(u,v;j) =N_F\bigl(A_0(u),B_0^*(v);j-c_{u,v}\bigr)$. Using the definition of the centered profile on both sides of the preceding counting identity,
\[
 \begin{aligned}
 L_G(u,v;j)
 &=N_G(u,v;j)-p^{n-1}\\
 &=N_F\bigl(A_0(u),B_0^*(v);j-c_{u,v}\bigr)-p^{n-1}\\
 &=L_F\bigl(A_0(u),B_0^*(v);j-c_{u,v}\bigr).
 \end{aligned}
\]
This is Equation~\eqref{eq:ea-profile}.

For the autocorrelation identity, substitute the derivative formula directly into the definition:
\begin{align*}
 \AC_G(u,v)
 &=\sum_{x\in\F_{p^n}}
   \zeta_p^{\Tr_1^m(vD_uG(x))}\\
 &=\sum_x
   \zeta_p^{\Tr_1^m(B_0^*(v)D_{A_0(u)}F(A(x)))+c_{u,v}}\\
 &=\zeta_p^{c_{u,v}}
   \sum_x
   \zeta_p^{\Tr_1^m(B_0^*(v)D_{A_0(u)}F(A(x)))}.
\end{align*}
Again put $x'=A(x)$.  Since $A$ is bijective,
\[
 \sum_x
 \zeta_p^{\Tr_1^m(B_0^*(v)D_{A_0(u)}F(A(x)))}
 =\sum_{x'\in\F_{p^n}}
 \zeta_p^{\Tr_1^m(B_0^*(v)D_{A_0(u)}F(x'))},
\]
which is $\AC_F(A_0(u),B_0^*(v))$.  Hence
\[
 \AC_G(u,v)
 =\zeta_p^{\Tr_1^m(vC_0(u))}
 \AC_F(A_0(u),B_0^*(v)),
\]
which is~\eqref{eq:ea-ac}.  If $C=0$, then $C_0=0$ and hence $c_{u,v}=0$ for every $(u,v)$.  The profile identity then reduces to $L_G(u,v;j) =L_F\bigl(A_0(u),B_0^*(v);j\bigr)$, which is~\eqref{eq:affine-profile}.
\end{proof}

For scalar affine maps $A(x)=\alpha x+\alpha'$ and $B(y)=\beta y+\beta'$, the adjoint of multiplication by $\beta$ is again multiplication by $\beta$, because $\Tr_1^m(v\beta y)=\Tr_1^m((\beta v)y)$. In this scalar case, the general affine transformation law specializes to $L_G(u,v;j)=L_F(\alpha u,\beta v;j)$. For $H=F+C$ with $C(x)=\gamma x+\gamma'$, Theorem~\ref{thm:ea-profile} specializes to
\begin{equation*}
 L_H(u,v;j)
 =L_F\!\left(u,v;
 j-\Tr_1^m(v\gamma u)\right),
\end{equation*}
while $\AC_H(u,v) =\zeta_p^{\Tr_1^m(v\gamma u)}\AC_F(u,v)$. In particular, the profile is translated among its $p$ levels; it is not multiplied by a root of unity.  The latter could not hold in general because the profile entries are integers.

The transformation law yields the invariant that is most natural for the level-resolved theory.

\begin{corollary}
\label{cor:ea-invariance}
If $F$ and $G$ are EA-equivalent, then the multiset
$
\left\{
L_F(u,v;j):
u\in\F_{p^n}^*,\ v\in\F_{p^m}^*,\ j\in\Fp
\right\}
$
is the same for $F$ and $G$.  In particular,
$ \Gamma_G^{(p)}=\Gamma_F^{(p)}$.
For affine equivalence, each profile is preserved after reindexing the input and output masks, without a level translation.  In particular, affine-equivalent functions also satisfy
 $\gamma_G^{(p,0)}=\gamma_F^{(p,0)}.$
\end{corollary}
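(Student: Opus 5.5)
The plan is to read off the multiset invariance directly from the transformation law \eqref{eq:ea-profile} of Theorem~\ref{thm:ea-profile}. Write $G=B\circ F\circ A+C$ with linear parts $A_0,B_0,C_0$. I will construct an explicit bijection
\[
 \Phi:(u,v,j)\longmapsto\bigl(A_0(u),\,B_0^*(v),\,j-\Tr_1^m(vC_0(u))\bigr)
\]
of the index set $\F_{p^n}^*\times\F_{p^m}^*\times\Fp$ onto itself. Then \eqref{eq:ea-profile} says exactly $L_G=L_F\circ\Phi$, and the two multisets coincide.

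The verification that $\Phi$ is a bijection of this set proceeds in three steps, in order.
\begin{enumerate}[label=\textup{(\alph*)}]
\item Since $A_0$ and $B_0^*$ are invertible $\Fp$-linear maps, they send nonzero vectors to nonzero vectors. Hence $\Phi$ maps the index set into itself, and it is injective on the pair $(u,v)$.
\item For fixed $(u,v)$, the map $j\mapsto j-c_{u,v}$ is a translation of $\Fp$, so it is a permutation of $\Fp$.
\item Combining (a) and (b), $\Phi$ is a composition of a bijection on the pairs $(u,v)$ with a fiberwise permutation of the levels $j$. It is therefore a bijection of the finite index set.
\end{enumerate}
Equality of multisets gives equality of maxima of absolute values, so $\Gamma_G^{(p)}=\Gamma_F^{(p)}$ follows immediately from \eqref{eq:profile-dlu}.

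For affine equivalence we take $C=0$ and use \eqref{eq:affine-profile} instead: $\mathcal L_G(u,v)=\mathcal L_F(A_0(u),B_0^*(v))$ entrywise, with no level shift. In particular $L_G(u,v;0)=L_F(A_0(u),B_0^*(v);0)$. Since $(u,v)\mapsto(A_0(u),B_0^*(v))$ permutes $\F_{p^n}^*\times\F_{p^m}^*$, the maxima in \eqref{eq:zero-dlu} agree, giving $\gamma_G^{(p,0)}=\gamma_F^{(p,0)}$.

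There is no real obstacle here. The only point needing care is that the level translation depends on $(u,v)$, so $\Phi$ is not a product map. Bijectivity must therefore be argued fiberwise, as in step (c), rather than by factoring $\Phi$ into independent maps on each coordinate. I will also remark that this level translation is exactly why only the full multiset, and not the zero-level parameter, is EA-invariant in general.
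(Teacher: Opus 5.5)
Your proposal is correct and matches the paper's proof: the paper defines the same map $\Phi(u,v,j)=\bigl(A_0(u),B_0^*(v),j-\Tr_1^m(vC_0(u))\bigr)$ and checks that it is a bijection of $\F_{p^n}^*\times\F_{p^m}^*\times\Fp$ by recovering $u,v$ from the first two coordinates and then $j$ from the third. It reads off $L_G=L_F\circ\Phi$ to get the multiset equality and $\Gamma_G^{(p)}=\Gamma_F^{(p)}$, and then, for affine equivalence ($C_0=0$), restricts to $j=0$ to obtain $\gamma_G^{(p,0)}=\gamma_F^{(p,0)}$.
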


\begin{proof}
Let
\[
 \Phi(u,v,j)=\bigl(A_0(u),B_0^*(v),j-\Tr_1^m(vC_0(u))\bigr).
\]
Because $A_0$ and $B_0^*$ are invertible $\Fp$-linear maps, they induce bijections of
$\F_{p^n}^*$ and $\F_{p^m}^*$, respectively.  For each fixed pair $(u,v)$, translation by the element
$-\Tr_1^m(vC_0(u))$ is a bijection of $\Fp$.  It follows that $\Phi$ is a bijection of $\F_{p^n}^*\times\F_{p^m}^*\times\Fp$. Indeed, given $(u',v',j')$, the first two coordinates uniquely determine
$u=A_0^{-1}(u')$ and $v=(B_0^*)^{-1}(v')$, and then the third coordinate uniquely determines
$j=j'+\Tr_1^m(vC_0(u))$.

By~\eqref{eq:ea-profile}, $L_G(u,v;j)=L_F(\Phi(u,v,j))$. Since $\Phi$ is bijective, the collection of all profile entries of $G$ indexed by nonzero input and output masks and by all $j\in\Fp$ is exactly the same multiset as the corresponding collection for $F$.  Taking the largest absolute value on both sides gives
$\Gamma_G^{(p)}=\Gamma_F^{(p)}$.

If the equivalence is affine, then $C=0$ and hence $C_0=0$.  Formula~\eqref{eq:affine-profile} becomes $L_G(u,v;j)=L_F(A_0(u),B_0^*(v);j)$, so the trace level itself is unchanged.  Restricting this identity to $j=0$ and using the bijectivity of $A_0$ and $B_0^*$ on the nonzero masks yields
\[
 \max_{u\ne0,\,v\ne0}|L_G(u,v;0)|
 =\max_{u'\ne0,\,v'\ne0}|L_F(u',v';0)|,
\]
which is the last claim. The proof is shown.
\end{proof}

\begin{remark}
\label{rem:zero-level-ea}
For affine equivalence, every trace level is preserved after reindexing the input and output masks, and therefore the zero-level parameter $\gamma_F^{(p,0)}$ is affine-invariant by Corollary~\ref{cor:ea-invariance}.  For general EA equivalence and $p>2$, the added affine output term may send the level $j=0$ to a nonzero level.  Thus EA-equivalence naturally preserves the full-profile parameter $\Gamma_F^{(p)}$, but not a distinguished level in isolation.  For instance, $F(x)=x^9$ and $G(x)=x^9+x$ over $\F_{5^2}$ are EA-equivalent.  A direct enumeration of the nonzero input and output masks and all five trace levels gives $\gamma_F^{(5,0)}=4$ and $\gamma_G^{(5,0)}=5$, while $\Gamma_F^{(5)}=\Gamma_G^{(5)}=5$.
The binary case is exceptional: when $p=2$, the two centered levels satisfy $L_F(u,v;1)=-L_F(u,v;0)$, so a level translation changes at most the sign and the usual absolute DLCT magnitude remains EA-invariant, in agreement with the binary theory of~\cite{CanteautEtAl19}.
\end{remark}

We shall now study the behavior under CCZ equivalence. Recall the standard graph formulation.  Let $q=p^n$ and let $F,G:\F_q\to\F_q$.  They are \emph{CCZ-equivalent} if there exists an affine permutation $\mathcal A$ of $\F_q\times\F_q$ that maps the graph of $F$ onto the graph of $G$~\cite{CCZ98}.  Write the linear part of $\mathcal A$ in block form as
\begin{equation*}
 M(x,y)=
 \bigl(M_{11}(x)+M_{12}(y),\,
       M_{21}(x)+M_{22}(y)\bigr),
\end{equation*}
where each $M_{ij}:\F_q\to\F_q$ is $\Fp$-linear.  Indeed, this decomposition is obtained by restricting $M$ to $\F_q\times\{0\}$ and $\{0\}\times\F_q$ and then projecting onto the two coordinates.  Thus $\mathcal A(x,y)=M(x,y)+(c,d)$ for some $(c,d)\in\F_q^2$.  Since $\mathcal A$ maps the graph of $F$ bijectively onto the graph of $G$, the first-coordinate map $\phi_F(x)=M_{11}(x)+M_{12}(F(x))+c$ is a permutation of $\F_q$.

As we can see in the following example, the behavior under CCZ equivalence is subtler than under EA-equivalence because a general CCZ transformation can mix an input difference with the corresponding output difference.  Nevertheless, the transport of individual DDT entries is exact, and this gives a precise formula for the induced transformation of the level counts, as shown in Theorem \ref{thm:ccz-ddt-transport}.

\begin{example}
Let $q=27$ and let $g$ be a root of the primitive polynomial
$X^{3}+2X^{2}+1\in\F_3[X]$, so that $\F_{27}=\F_3(g)$; then, $F(x)$ is a permutation of $\F_q$ with differential uniformity $3$. Taking $\mathcal A(x,y)=(y,x)$, so that
$M_{11}=M_{22}=0$ and $M_{12}=M_{21}=\mathrm{id}$, the map $\mathcal A$
carries the graph of $F$ onto the graph of $G=F^{-1}$; hence $F$ and $G$
are CCZ-equivalent, and \eqref{eq:ccz-ddt-transport} specializes to $\DDT_F(u,b)=\DDT_G(b,u)$. 

For all $u, v \in \F_q^*$ and $j \in \F_3$, $L_F(u,v;j) \equiv 0 \pmod 3$ whereas $L_{G}\bigl(1,g+g^{2}\bigr)=(-5,1,4).$ and hence the vector $(-5,1,4)$ occurs in the profile of $G$ but not in  the profile of $F$, therefore, $\{L_F(u,v) \mid u,v \in \F_q^*\} \neq \{L_G(u,v) \mid u,v \in \F_q^*\}$, and CCZ does not preserve the DLCT for general $p$ \textup{(}this was shown for $p=2$ in~\textup{\cite{LiLiLiQu19})}.
 \end{example}

In the binary setting, preservation of the differential spectrum under CCZ equivalence follows from the original graph-equivalence framework~\cite{CCZ98}.  We give the coordinate argument below over arbitrary prime characteristic, both to make the extension explicit and because the resulting transport formula is needed for the $p$-ary level counts.

\begin{theorem}\label{thm:ccz-ddt-transport}
Let $F,G:\F_q\to\F_q$ be CCZ-equivalent through $\mathcal A$ as above.  For $u,b\in\F_q$, define
\[
 \gamma(u,b)=M_{11}(u)+M_{12}(b),
 \qquad
 \delta(u,b)=M_{21}(u)+M_{22}(b).
\]
Then the DDT entry indexed by $(u,b)$ is transported according to
\begin{equation}
\label{eq:ccz-ddt-transport}
 \DDT_F(u,b)
 =
 \DDT_G\bigl(\gamma(u,b),\delta(u,b)\bigr).
\end{equation}
Consequently, if $u\in\F_q$, $v\in\F_q$, and $j\in\Fp$ are fixed, summing the transported DDT entries over the trace hyperplane
$\{c\in\F_q:\Tr_1^n(vc)=j\}$ gives
\begin{equation}
\label{eq:ccz-profile-transport}
 N_F(u,v;j)
 =
 \sum_{\substack{c\in\F_q\\ \Tr_1^n(vc)=j}}
 \DDT_G\bigl(\gamma(u,c),\delta(u,c)\bigr).
\end{equation}
After centering this level count, the corresponding profile entry is therefore
\begin{equation}
\label{eq:ccz-centered-profile-transport}
 L_F(u,v;j)
 =
 \sum_{\substack{c\in\F_q\\ \Tr_1^n(vc)=j}}
 \DDT_G\bigl(\gamma(u,c),\delta(u,c)\bigr)
 -p^{n-1}.
\end{equation}
\end{theorem}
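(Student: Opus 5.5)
The plan is to express $\DDT_F(u,b)$ as a count of pairs of graph points and transport that count through $\mathcal A$. By definition, $\DDT_F(u,b)$ counts the $x\in\F_q$ with $F(x+u)-F(x)=b$. Equivalently, it counts ordered pairs of points $P=(x,F(x))$ and $P'=(x',F(x'))$ on the graph $\mathcal G_F$ with $P'-P=(u,b)$, since the first coordinate of $P'$ is forced to be $x+u$. So
\[
 \DDT_F(u,b)=\#\{(P,P')\in\mathcal G_F^2:P'-P=(u,b)\}.
\]
The same description holds for $G$.

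Next apply $\mathcal A=M+(c,d)$. Because $\mathcal A$ is a bijection of $\F_q^2$ carrying $\mathcal G_F$ onto $\mathcal G_G$, it induces a bijection of $\mathcal G_F^2$ onto $\mathcal G_G^2$. The translation $(c,d)$ cancels in differences and $M$ is $\Fp$-linear, so $\mathcal A(P')-\mathcal A(P)=M(P'-P)$. Hence the pairs with difference $(u,b)$ correspond bijectively to pairs in $\mathcal G_G^2$ with difference $M(u,b)=(\gamma(u,b),\delta(u,b))$. For the reverse inclusion, any pair $(Q,Q')\in\mathcal G_G^2$ with $Q'-Q=M(u,b)$ pulls back under $\mathcal A^{-1}$ to a pair in $\mathcal G_F^2$ whose difference is $M^{-1}M(u,b)=(u,b)$, using that $M$ is invertible. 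This gives \eqref{eq:ccz-ddt-transport}.

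The only delicate point, and the main thing to check, is that counting pairs with a prescribed difference really equals the DDT entry on the $G$ side. This holds because a graph point is determined by its first coordinate, so a pair $(Q,Q')$ with $Q'-Q=(\gamma,\delta)$ corresponds to exactly one $y$ with $G(y+\gamma)-G(y)=\delta$. This is true even when $\gamma=0$: a pair then exists only if $\delta=0$, consistent with $\DDT_G(0,\delta)=q\1_{\{\delta=0\}}$.

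Finally, \eqref{eq:ccz-profile-transport} follows by substituting \eqref{eq:ccz-ddt-transport}, with $b$ renamed $c$, into the hyperplane-sum formula \eqref{eq:profile-ddt} of Proposition~\ref{prop:profile-ddt}, with $m=n$. Subtracting $p^{n-1}$, as in \eqref{eq:centered-profile-ddt}, then gives \eqref{eq:ccz-centered-profile-transport}.
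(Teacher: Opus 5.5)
Your proof is correct and is essentially the paper's argument. The paper shows that $\phi_F$ restricts to a bijection $S_F(u,b)\to S_G\bigl(\gamma(u,b),\delta(u,b)\bigr)$ using exactly your ingredients: translations cancel in differences of graph points, a graph point is determined by its first coordinate, and invertibility of $M$ gives the reverse direction. Your count of ordered pairs of graph points with prescribed difference repackages that fiber bijection, and the profile formulas then follow, as in the paper, by substitution into Proposition~\ref{prop:profile-ddt}.
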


\begin{proof}
Fix $u,b\in\F_q$ and set $S_F(u,b)=\{x\in\F_q:D_uF(x)=b\}$. For $x\in S_F(u,b)$, let $y=\phi_F(x)=M_{11}(x)+M_{12}(F(x))+c$. Because $\mathcal A$ maps the graph of $F$ onto the graph of $G$, we have  $\mathcal A(x,F(x))=(y,G(y))$.
Since $D_uF(x)=b$, the second graph point satisfies $(x+u,F(x+u))=(x,F(x))+(u,b)$. Using the affine form of $\mathcal A$ and the linearity of $M$, we obtain
\begin{align*}
 \mathcal A(x+u,F(x+u))
 =\mathcal A(x,F(x))+M(u,b)
 =(y,G(y))+
   \bigl(\gamma(u,b),\delta(u,b)\bigr).
\end{align*}
The left-hand side is another point on the graph of $G$.  Therefore
\[
 \mathcal A(x+u,F(x+u))
 =\bigl(y+\gamma(u,b),G(y+\gamma(u,b))\bigr),
\]
and comparison of the second coordinates gives
$ D_{\gamma(u,b)}G(y)=\delta(u,b)$.
Thus the permutation $\phi_F$ sends $S_F(u,b)$ injectively into $S_G\bigl(\gamma(u,b),\delta(u,b)\bigr)$.

It remains to prove surjectivity on these derivative fibers.  Let $y\in S_G\bigl(\gamma(u,b),\delta(u,b)\bigr)$.
Then both $(y,G(y))$ and 
 $\bigl(y+\gamma(u,b),G(y)+\delta(u,b)\bigr)$
lie on the graph of $G$, and their difference is
$ \bigl(\gamma(u,b),\delta(u,b)\bigr)=M(u,b)$.
Applying $\mathcal A^{-1}$ to the two graph points gives two points $(x,F(x))$ and $(x',F(x'))$ on the graph of $F$.  Because the translation part of $\mathcal A$ cancels when a difference is taken, their images satisfy
\[
 M\bigl(x'-x,F(x')-F(x)\bigr)
 =\bigl(\gamma(u,b),\delta(u,b)\bigr)
 =M(u,b).
\]
The linear map $M$ is invertible, so equality of the two images implies equality of their arguments: $x'-x=u,\qquad F(x')-F(x)=b$. Thus $x'=x+u$ and $D_uF(x)=b$, so $x\in S_F(u,b)$.  Moreover, the first coordinate of $\mathcal A(x,F(x))$ is $y$, hence $y=\phi_F(x)$.  Therefore $\phi_F$ restricts to a bijection
\[
 S_F(u,b)\longrightarrow
 S_G\bigl(\gamma(u,b),\delta(u,b)\bigr),
\]
which proves~\eqref{eq:ccz-ddt-transport}.

We now derive the profile transport formula.  Fix $u,v$, and $j$.  Proposition~\ref{prop:profile-ddt} expresses the level count as
\[
 N_F(u,v;j)
 =\sum_{\substack{c\in\F_q\\\Tr_1^n(vc)=j}}\DDT_F(u,c).
\]
For every derivative value $c$ occurring in this sum, the first part of the theorem gives $\DDT_F(u,c) =\DDT_G\bigl(\gamma(u,c),\delta(u,c)\bigr)$. Replacing each DDT entry in the hyperplane sum by its transported value therefore gives
 \begin{align*}
 N_F(u,v;j)
 =\sum_{\substack{c\in\F_q\\\Tr_1^n(vc)=j}}
   \DDT_F(u,c)
 =\sum_{\substack{c\in\F_q\\\Tr_1^n(vc)=j}}
   \DDT_G\bigl(\gamma(u,c),\delta(u,c)\bigr),
 \end{align*}
which is~\eqref{eq:ccz-profile-transport}.  By Definition~\ref{def:level-count}, the centered profile is obtained by subtracting $p^{n-1}$ from this level count.  Hence
\[
 \begin{aligned}
 L_F(u,v;j)
 =N_F(u,v;j)-p^{n-1}
 =\sum_{\substack{c\in\F_q\\\Tr_1^n(vc)=j}}
   \DDT_G\bigl(\gamma(u,c),\delta(u,c)\bigr)-p^{n-1},
 \end{aligned}
\]
which is Equation~\eqref{eq:ccz-centered-profile-transport}.
\end{proof}

Theorem~\ref{thm:ccz-ddt-transport} identifies the distinction between EA and general CCZ equivalence.  If $M_{12}=0$, then $\gamma(u,b)=M_{11}(u)$ is independent of $b$, so all terms in~\eqref{eq:ccz-profile-transport} remain in a single DDT row of $G$; this is the situation underlying the EA transformation law proved above.  For a genuine CCZ transformation with $M_{12}\ne0$, the input difference $\gamma(u,b)$ may vary with $b$.  Thus a trace-hyperplane sum in one DDT row of $F$ is generally transported to a sum involving several DDT rows of $G$.  This gives the structural obstruction to a profile-to-profile transformation law of the EA type.  In characteristic two, actual failure of CCZ invariance for the corresponding autocorrelation and DLCT quantities is known~\cite{CanteautEtAl19}.


\begin{corollary}\label{cor:ccz-global-energy}
Let $F,G:\F_q\to\F_q$ be CCZ-equivalent, where $q=p^n$.  Define the global nontrivial profile energy by
\begin{equation*}
 \mathcal E(F)=
 \sum_{u\in\F_q^*}
 \sum_{v\in\F_q^*}
 \sum_{j\in\Fp}|L_F(u,v;j)|^2.
\end{equation*}
Then  $\mathcal E(F)=\mathcal E(G)$.
\end{corollary}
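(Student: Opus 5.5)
By \eqref{eq:profile-ddt-distance} in Theorem~\ref{thm:profile-ddt-energy} with $m=n$, the global energy is a sum of DDT-row energies,
\[
 \mathcal E(F)=(p-1)p^{n-1}\sum_{u\in\F_q^*}\sum_{b\in\F_q}\bigl(\DDT_F(u,b)-1\bigr)^2 .
\]
The plan is therefore to show that the quantity $\Sigma(F)=\sum_{u\ne0}\sum_{b}(\DDT_F(u,b)-1)^2$ is CCZ-invariant. Expanding the square and using $\sum_b\DDT_F(u,b)=q$ for each $u$, we get
\[
 \Sigma(F)=\sum_{u\ne0}\sum_b\DDT_F(u,b)^2-(q-1)q .
\]
So it suffices to prove that $\sum_{(u,b)\ne(0,0)}\DDT_F(u,b)^2$ is invariant. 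The row $u=0$ contributes only $\DDT_F(0,0)^2=q^2$, and for $u=0,b\ne0$ the entries vanish, so the sum over $u\ne0$ is the full sum minus $q^2$.

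The key step is Theorem~\ref{thm:ccz-ddt-transport}. It gives $\DDT_F(u,b)=\DDT_G(M(u,b))$ with $M(u,b)=(\gamma(u,b),\delta(u,b))$, and $M$ is the invertible linear part of $\mathcal A$. Hence $M$ permutes $\F_q^2$ and fixes $(0,0)$, so it permutes $\F_q^2\setminus\{(0,0)\}$. Summing $\DDT_F(u,b)^2$ over all $(u,b)\ne(0,0)$ and reindexing by $M$ then gives
\[
 \sum_{(u,b)\ne(0,0)}\DDT_F(u,b)^2=\sum_{(u',b')\ne(0,0)}\DDT_G(u',b')^2 .
\]
Combining this with the previous paragraph yields $\Sigma(F)=\Sigma(G)$, and hence $\mathcal E(F)=\mathcal E(G)$.

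The main subtlety is that $M$ can send a pair $(u,b)$ with $u\ne0$ to one with $u'=0$, which is exactly how CCZ mixes rows. This is handled by working with the whole punctured table rather than row by row. Entries with $u'=0$, $b'\ne0$ are zero for both functions. Consequently, the sum over $u\ne0$ equals the sum over all $(u,b)\ne(0,0)$ for both $F$ and $G$, and no boundary terms are lost.
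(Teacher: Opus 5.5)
Your proposal is correct and follows essentially the paper's route: the second-moment identity \eqref{eq:profile-ddt-distance} with $m=n$, combined with the bijective reindexing $\DDT_F(u,b)=\DDT_G(M(u,b))$ from Theorem~\ref{thm:ccz-ddt-transport}. The only difference is cosmetic. The paper keeps the centered form $(\DDT_F(u,b)-1)^2$, sums over all $u\in\F_q$ including $u=0$, reindexes over all of $\F_q^2$, and then subtracts the common $u=0$ contribution $q(q-1)$. You instead expand the square first and reindex the raw squares over $\F_q^2\setminus\{(0,0)\}$.
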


\begin{proof}
Since $F$ and $G$ are square maps, Theorem~\ref{thm:profile-ddt-energy} gives, for every $u\in\F_q$,
\[
 \sum_{v\in\F_q^*}\sum_{j\in\Fp}|L_F(u,v;j)|^2
 =(p-1)p^{n-1}
 \sum_{b\in\F_q}\bigl(\DDT_F(u,b)-1\bigr)^2.
\]
Summing over all $u\in\F_q$ yields
\begin{equation}
\label{eq:all-u-energy}
\sum_{u\in\F_q}\sum_{v\in\F_q^*}\sum_{j\in\Fp}|L_F(u,v;j)|^2
 =(p-1)p^{n-1}
 \sum_{(u,b)\in\F_q^2}\bigl(\DDT_F(u,b)-1\bigr)^2.
\end{equation}
By Theorem~\ref{thm:ccz-ddt-transport}, for each $(u,b)\in\F_q^2$, $\DDT_F(u,b) =\DDT_G\bigl(\gamma(u,b),\delta(u,b)\bigr)$, and the ordered pair $(\gamma(u,b),\delta(u,b))$ is precisely $M(u,b)$.  Because $M$ is an invertible $\Fp$-linear map on $\F_q^2$, the map $(u,b)\longmapsto \bigl(\gamma(u,b),\delta(u,b)\bigr)$ is a bijection of $\F_q^2$.  Reindexing the sum by this bijection gives
\[
 \sum_{(u,b)\in\F_q^2}\bigl(\DDT_F(u,b)-1\bigr)^2
 =
 \sum_{(\gamma,\delta)\in\F_q^2}
 \bigl(\DDT_G(\gamma,\delta)-1\bigr)^2.
\]
Applying the same identity~\eqref{eq:all-u-energy} to $G$ gives
\[
 \sum_{\gamma\in\F_q}\sum_{v\in\F_q^*}\sum_{j\in\Fp}
 |L_G(\gamma,v;j)|^2
 =(p-1)p^{n-1}
 \sum_{(\gamma,\delta)\in\F_q^2}
 \bigl(\DDT_G(\gamma,\delta)-1\bigr)^2.
\]
The right-hand side equals that of~\eqref{eq:all-u-energy} for $F$ by the preceding reindexing.  Therefore
\[
 \sum_{u\in\F_q}\sum_{v\in\F_q^*}\sum_{j\in\Fp}|L_F(u,v;j)|^2
 =
 \sum_{\gamma\in\F_q}\sum_{v\in\F_q^*}\sum_{j\in\Fp}|L_G(\gamma,v;j)|^2.
\]

It remains only to remove the term $u=0$.  Since $D_0F(x)=0$ for every $x$, independently of $F$, $\DDT_F(0,0)=q, \qquad \DDT_F(0,b)=0\quad(b\ne0)$. Therefore
\[
 \sum_{b\in\F_q}\bigl(\DDT_F(0,b)-1\bigr)^2
 =(q-1)^2+(q-1)=q(q-1),
\]
and the same value holds for $G$.  Thus the $u=0$ contributions are equal and may be subtracted from the two total-energy identities.  The remaining sums are precisely $\mathcal E(F)$ and $\mathcal E(G)$, which proves the claim.
\end{proof}

\section{Conclusion}\label{S6}
We introduced a level-resolved $p$-ary differential-linear profile that retains the centered multiplicities of all trace values of a derivative component.  Its discrete Fourier transform is the family of additive autocorrelations indexed by the nonzero prime-field multiples of the output mask.  This gives the binary identity $\DLCT=\AC/2$ when $p=2$ and, for general $p$, explains why one zero-trace count does not contain the full level distribution.

The profile and the DDT are linked without loss of differential information.  For a fixed input difference, profile entries are affine trace-hyperplane sums of the corresponding DDT row, while the profiles over all nonzero masks recover that row by Fourier inversion.  The exact second-moment identity further shows that the total profile energy in a derivative direction is a constant multiple of the squared distance from the balanced DDT row.  Hence the row differential spectrum, rather than differential uniformity alone, determines the aggregate profile energy.  In particular, the profiles in a direction all vanish exactly when that derivative is balanced; for square maps in odd characteristic, vanishing in every nonzero direction is equivalent to planarity.  The two power-function examples illustrate both ends of the explicit calculation: for the Gold $x^{p^k+1}$ function, the profile spectrum is completely determined by the image of a linearized derivative, whereas for the inverse monomial the remaining trace-hyperplane sum reduces exactly to a classical Kloosterman sum.

The equivalence results show why retaining every trace level is useful.  EA-equivalence reindexes the input and output masks and translates the level, and therefore preserves the complete profile spectrum and $\Gamma_F^{(p)}$.  A general CCZ equivalence can mix several derivative directions, so the same entrywise reindexing does not hold, but the global nontrivial profile energy remains invariant for square maps.  Natural next problems are to determine closed numerical spectra for further odd-characteristic families, to identify which finer profile statistics are CCZ-invariant, and to connect the levelwise quantities more directly with nonbinary differential-linear attacks.

\end{document}